\newtheorem{theorem}{Theorem}
\newtheorem{lemma}[theorem]{Lemma}
\newtheorem{corollary}[theorem]{Corollary}
\newtheorem{defn}[theorem]{Definition}
\newtheorem{observation}[theorem]{Observation}
\newtheorem{problem}{Open Problem}
\newtheorem{prop}[theorem]{Proposition}
\newcommand{\noin}{\noindent}
\newcommand{\qed}{\ \hfill \rule{1ex}{1ex}} 
\newenvironment{proof}{{\noin \bf Proof}: }{\qed \medskip}
\newcommand{\tell}{T_{\ell}}
\author{N.E. Clarke$^1$ \and D. Cox$^2$ \and  C. Duffy$^3$ \and D. Dyer$^4$ \and S. Fitzpatrick$^5$ \and  M.E. Messinger$^6$}
\title{Limited Visibility Cops and Robber}
\date{}
\begin{document}

\maketitle
\begin{small}
\begin{center}
	$^1$ Mathematics and Statistics, Acadia University, Canada\\
	$^2$ Mathematics, Mount Saint Vincent University, Canada\\
	$^3$ Mathematics and Statistics, University of Saskatchewan, Canada\\
	$^4$ Mathematics and Statistics, Memorial University of Newfoundland, Canada\\
	$^5$ Mathematical and Computational Sciences, University of Prince Edward Island, Canada\\
	$^6$ Mathematics and Computer Science, Mount Allison University, Canada\\
	\vspace{0.5in}
	August 23, 2017
\end{center}
\end{small}

\begin{abstract}
We consider a variation of the Cops and Robber game where the cops can only see the robber when the distance between them is at most a fixed parameter $\ell$.  We consider the basic consequences of this definition for some simple graph families, and show that this model is not monotonic, unlike common models where the robber is invisible.  We see that cops' strategy consists of a phase in which they need to ``see" the robber (move within distance $\ell$ of the robber), followed by a phase in which they capture the robber.  In some graphs the first phase is the most resource intensive phase (in terms of number of cops needed), while in other graphs, it is the second phase.  Finally, we characterize those trees for which $k$ cops are sufficient to guarantee capture of the robber for all $\ell \ge 1$.
\end{abstract}

\section{Introduction}\label{intro}
Cops and Robber is a well-known and well-studied pursuit-evasion game played on a graph.  The classical game is played on an undirected reflexive graph by two players: the first controls a set of cops and the second controls a single robber. The cops (any or all of which may move simultaneously) move so that some cop eventually captures the robber, while the robber moves to avoid the cops.  In this game, the cops and robber are always aware of the others' positions.

Cops and Robber was independently introduced by Nowakowski and Winkler \cite{NW} and Quillot \cite{Q}, with both papers characterizing those graphs for which one cop was sufficient to capture the robber. Much of the work surrounding this model has been involved with characterizing those graphs for which exactly $k$ cops are necessary and sufficient to capture the robber, with even the case $k=2$ being nontrivial \cite{CM12}. Beyond characterizations, there has also been work done on minimizing the length of the game, or the period for which the robber has been uncaptured \cite{BGHK09}.

Many variants of Cops and Robber have been considered. (See the book~\cite{Cops} for the extensive survey of Cops and Robber.) Some of these variants are due to restrictions on how the cops can move; for example, remaining adjacent to one another \cite{CN05}, or always moving along a shortest path to the robber \cite{Zombie1, Zombie}. Other recent work has included limiting the cops' knowledge of the robber's position.  In other words, exactly when the cops can see the robber is limited. When the robber is completely invisible, the resulting game looks a great deal like {\it edge-searching}, another pursuit-evasion game in which cops chase an invisible, fast-moving robber that may stop on vertices or edges. Edge-searching and node-searching are often used to give insight into the pathwidth of a graph, though the games themselves have spawned a great deal of research in their own respective rights. One variation is connected edge-searching, where the area cleaned by the searchers must induce a connected subgraph. (See, for example, \cite{YaDyAl09}.)

Loosely, the {\it zero-visibility cop number} of a graph $G$ is the minimum number of cops needed to guarantee the capture of an invisible robber, played under the standard Cops and Robber movement rules (so, unlike edge-searching, the robber moves at the same speed as the cops). This problem has been studied for a variety of graph families, and more recent work has been done on its complexity and its relation to pathwidth \cite{Tang,Tosic,Pathwidth,Complexity}. A variant where the robber is invisible but capture is not guaranteed, but expected to happen eventually, has also been studied \cite{KMP13}.

Of course, this problem naturally generalizes to the {\it $\ell$-visibility cop number}, where a cop can see any robber that is at distance at most $\ell$. This problem is an unusual hybrid of the classic Cops and Robber game and the zero-visibility version, as now the cops' task may be broken down into two sub-tasks: first, see the robber, and second, capture the robber.  The  {\it one-visibility} Cops and Robber game was considered  in~\cite{Tang}, and versions of the one-visibility  game where cops are allowed to ``jump'' as in edge-searching were investigated in \cite{AFGP16,bushcutting}.

The focus of this paper is the general $\ell$-visibility Cops and Robber game. We consider $\ell \geq 1$, unless explicitly stated otherwise.  In Section~\ref{sec:def} we formally introduce the Cops and Robber game, the zero-visibility Cops and Robber game, and the $\ell$-visibility Cops and Robber game. We also provide definitions for a number of concepts used throughout the paper.  In Section~\ref{sec:Prelim}, we observe the relationships between the $\ell$-visibility cop number for various values of $\ell$ and we  determine the $\ell$-visibility cop number for some  graph families.  Additionally, we comment on the relationship between the required number of cops on a graph and the number required on any retract, and on the role of isometric trees in computing the $\ell$-visibility cop number of a graph. Results in \cite{NW, BI, Tang}, among others, highlight the importance of distance preserving subgraphs and retracts in both the classical game and the zero-visibility game. This is also true in the $\ell$-visibility game, as demonstrated in Section \ref{sec:Prelim}.

Many pursuit-evasion games on graphs involve the concept of monotonicity: once a set of vertices is known to be ``robber-free'', then the cops can guarantee the set can remain ``robber-free" throughout the game.  Interestingly, and unlike edge-searching, we show the $\ell$-visibility Cops and Robber game is not monotonic. We demonstrate this fact in Section~\ref{sec:mono}.

In Section~\ref{sec:see}, we examine the relationship between the number of cops required to see the robber, and for which graphs this is a sufficient number to capture the robber.  We show for visibility at least two, that this is the case whenever the $\ell$-visibility cop number of a graph and
its (regular) cop number differ by at least two.  We also show this to be  true for all chordal graphs for any $\ell \ge 0$.

Finally, in Section~\ref{sec:trees}, we  provide a complete characterization for the $\ell$-visibility number of a tree, based on the structure of the tree.   We conclude with a series of open problems in Section~\ref{sec:open}.


\section{Definitions and Preliminary results}\label{sec:tech}

\subsection{Definitions}\label{sec:def}
The classical game of Cops and Robber is played on an undirected reflexive graph with  two players: one player controls a set of cops and the other controls a single robber. (Typically, we will refer to the moves of the cops and robber, rather than of the two players.)  Before the game begins (considered to be on round 0), the cops choose  a set of vertices to occupy. The robber subsequently chooses  an unoccupied vertex to occupy.  At each round, the cops ``move'' and then the robber ``moves''.  In a ``move'' for the cops, each cop moves from their current vertex along an edge to a neighbouring vertex (the neighbouring vertex being the same as the initial vertex if a loop is traversed).  A move for the robber is defined similarly. Generally, we will neglect to consider loops, and instead allow some (or all) of the cops to remain on the same vertex on an unreflexive graph.  This is referred to as a {\it pass}.  The robber may also pass.

The original game is played with perfect information: both the cops and robber are aware of their opponents positions in each round. The cops win if, in some round, one cop moves to the vertex occupied by the robber; in such a situation we say the cops have {\it captured} the robber.  The robber wins if he avoids capture indefinitely.  It is assumed that both the cops and the robber play optimally; the cops to minimize the number of rounds in the game, and the robber to maximize.  The {\it cop number} of a graph $G$, denoted $c(G)$, is the minimum number of cops required to capture the robber in $G$.

The structural characterization of cop-win graphs is well known from~\cite{NW,Q}:   A graph $G$ is cop-win if and only if its vertices can be ordered $v_1, v_2, \ldots, v_n$ so that for each $v_i$, where $i >1$, there exists some $v_j$, where $j <i$, such that every $N[v_i] \cap \{v_1, \ldots, v_i\} \subseteq N[v_j]$.  This ordering is referred to as a {\it cop-win ordering} of $G$.  
Let $H_{i}$ be the subgraph induced on $\{v_1, v_2,  \ldots, v_i\}$.  The vertex $v_i$ is referred to as a {\it corner} in $H_i$ and  vertex $v_j$ is said to {\it dominate} $v_i$ in $H_i$, whenever $N_{H_i}[v_i] \subseteq N_{H_i}[v_j]$.   It follows that $v_n$ is a corner in $G$, and $H_n = G$.


The rules of the  $\ell$-visibility game, where $\ell \ge 0$,  do not vary from the classical game when it comes to both choice of positions in round 0 and the movement of the players.   The condition for winning is also the same in both versions of the game.  However, in the $\ell$-visibility game, the cops do not have perfect information, while the robber does.   In the $\ell$-visibility game, we will say that the cops \emph{see} the robber if any cop and the robber simultaneously occupy some vertices $x$ and $y$, respectively,  such that $d(x,y) \le \ell$.  (You could think of the cop within distance $\ell$ seeing the robber and sharing that information with the other cops.)   We note  that for a graph $G$ if $\ell \geq diam(G)$, then $\ell$-visibility Cops and Robber is identical to the classical game.

Let $c_\ell'(G)$ and $c_\ell(G)$ denote the minimum number of cops required to {\it see} and {\it capture} (respectively) the robber in the $\ell$-visibility Cops and Robber game.  It was noted in~\cite{Tang} that $c_1'(G) \leq c_1(G) \leq c_0(G)$ for all graphs $G$.  Note that {\it seeing} the robber does not necessarily imply capture of the robber: $c_1'(C_4) = 1 < 2 = c_1(C_4)$.  However, it does imply capture on chordal graphs (see Theorem~\ref{thm:ellchord}).


To properly describe the strategy for a finite set of cops, we require the following notation.
Let $G$ be a  connected graph. Let $\mathcal{L} = \{\ell_i\}_{i=1}^{i=k}$, where $\ell_i = \ell_i(0), \ell_i(1), \dots$ is a walk that describes the position of cop $i$ in $G$, where the argument indicates the round. We call $\mathcal{L}$ a \emph{$k$-cop strategy}.
We say that a cop \emph{vibrates} between time $t$ and $t+2k$ if there exists $uv \in E(G)$ such that $$\left(\ell_i(t), \ell_i(t+1), \dots, \ell_i(t+2k-1), \ell_i(t+2k)\right) = \left(u,v, \dots, u,v,u\right).$$

As the cops proceed through their strategies, the set of vertices that may possibly contain the robber decrease. Borrowing from the language of edge searching, a vertex known to not contain the robber is called \emph{clean}, otherwise it is \emph{dirty}. A vertex that is clean but again becomes dirty at a later time is said to be \emph{recontaminated}. A set of cops \emph{clean} a subgraph by adding each vertex of the subgraph to the set of clean vertices while allowing no vertex of the subgraph to become recontaminated. We refer to the set of dirty vertices as the \emph{robber territory}. We say that a $k$-cop strategy is a \emph{$k$-cop cleaning strategy} or a \emph{cleaning strategy using $k$ cops} if there is some time $t$ so that all of the vertices are clean at time $t$; that is, a time $t$ when the robber can be guaranteed captured.

 A subgraph $H$ of $G$ is {\it isometric} if for any pair of vertices $x,y \in V(H)$, $d_H(x,y) = d_G(x,y)$. In the case that $H$ is a tree, we say that $H$ is an isometric tree in $G$.  Recall that a graph homomorphism is a vertex mapping such that adjacency is preserved.  Given a (reflexive) graph $G$ and a subgraph $H$, we say that $H$ is a {\it retract} of $G$ if there is a homomorphism $f: V(G) \rightarrow V(H)$ such that for every $v \in V(H)$, $f(v) = v$.  We refer to the mapping $f$ as a {\it retraction} of $G$ onto $H$.

Throughout, we assume that graphs are finite and simple and contain a single robber. We note that there is a rich compendium of results for Cops and Robber on infinite graphs (for example see \cite{infinite}). We discuss the problem of limited visibility Cops and Robber on an infinite graph in Section  \ref{sec:open}. As we do not explore the concept of capture time and are restricted to finite graphs, our assumption that the graph contains a single robber is well-founded, as a strategy that is utilized to capture a single robber can be repeated until no robber remains.

\subsection{Preliminary Results}\label{sec:Prelim}

The notion of retract described above plays an important role in the classification of cop-win graphs \cite{NW}. We note that if $H$ is a retract in $G$, then $H$ is also an isometric subgraph in $G$. This follows from the fact that a retraction onto $H$ is edge-preserving on $G$, and the identity on $H$. We prove that for any retract $H$ of $G$, $c_\ell(H) \le c_\ell(G)$.  The proof of this result is similar for that of the cop number of retracts: $c(H) \le c(G)$ \cite{BI}.
	
	\begin{theorem} \label{thm:retract}
		For any retract $H$ of a graph $G$, $c_\ell(H) \le c_\ell(G)$ and $c_\ell'(H) \le c_\ell'(G)$ .
	\end{theorem}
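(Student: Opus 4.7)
The plan is a shadow-strategy argument analogous to the classical proof that $c(H) \le c(G)$ for a retract. Let $f: V(G) \to V(H)$ be a retraction, and suppose $\mathcal{L} = \{\ell_1, \ldots, \ell_k\}$ is a winning $k$-cop strategy on $G$ in the $\ell$-visibility game, for either capturing or seeing. I will have the cops on $H$ track virtual cops on $G$ that follow $\mathcal{L}$, placing actual cop $i$ at $f(\ell_i(t))$ at each time $t$. The initial position $f(\ell_i(0))$ lies in $V(H)$, and since $f$ preserves adjacency (including loops), the move from $f(\ell_i(t))$ to $f(\ell_i(t+1))$ is a legal move in $H$ whenever the corresponding move $\ell_i(t)\ell_i(t+1)$ is legal in $G$.

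The heart of the argument is the distance inequality $d_H(f(u), v) \le d_G(u, v)$ for any $u \in V(G)$ and $v \in V(H)$, which holds because a shortest $u$-to-$v$ path in $G$ maps under $f$ to a walk from $f(u)$ to $f(v) = v$ in $H$ of the same length. Using this, if a virtual cop sees the robber during the simulation on $G$---that is, $d_G(\ell_i(t), r) \le \ell$ for the real robber position $r \in V(H)$---then $d_H(f(\ell_i(t)), r) \le \ell$ as well, so the actual cop on $H$ also sees $r$. Capture is then immediate: at the round $T$ for which the strategy on $G$ gives $\ell_i(T) = r$, the actual cop stands at $f(\ell_i(T)) = f(r) = r$. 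The same inequality applied at the first moment the virtual cops see $r$ yields the analogous bound $c_\ell'(H) \le c_\ell'(G)$.

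The main obstacle I expect is a subtle informational issue: the strategy $\mathcal{L}$ is allowed to base its decisions on $r$ only when some virtual cop on $G$ is within distance $\ell$ of $r$, so for the shadow strategy on $H$ to be well defined, the real cops on $H$ must know $r$ at precisely those moments. The distance inequality above is exactly what resolves this, since it guarantees that whenever the virtual cops see $r$ in $G$, the actual cops see $r$ in $H$. Aside from this point, the argument is a routine verification that the shadow strategy inherits the winning property of $\mathcal{L}$.
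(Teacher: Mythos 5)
Your proposal is correct and follows essentially the same route as the paper: a shadow strategy in which cops on $H$ play the images under the retraction $f$ of cops executing a winning strategy on $G$, with the key verification being $d_H(f(x),y)\le d_G(x,y)$ for $y\in V(H)$, which guarantees the $H$-cops see the robber whenever the simulated $G$-cops do. The informational point you flag as the main obstacle is exactly the one the paper singles out and resolves in the same way.
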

	
	\begin{proof}
		Suppose $H$ is a retract of $G$ with retraction $f$, and $c_\ell (G)=k$.  We show $c_\ell (H)\leq k$.
		
		We consider a pair of $\ell$-visibility games, each with $k$ cops and one robber, played in parallel.  The first game is played on $G$ and the second game is played on $H$.  For the robber, the moves in the two games will be identical. (One can think of the robber as being totally unaware of the first game, and making decisions to avoid capture as long as possible in the second game only.)  As for the cops, in the first game, each of the $k$ cops will play according to a winning strategy in $G$.  In the second game, the cops will use the moves in the first game and the retraction $f$ to determine their moves.  Specifically, for each cop $C$ in $G$, there is a corresponding cop  $C'$ in $H$.  Whenever $C$ moves onto a vertex $v \in V(G)$, $C'$ will move onto the vertex $f(v)$ in $H$.  This is always possible since $f$ is edge-preserving.
		
		In order for the cops in $H$ to adhere to the rules of the $\ell$-visibility game, the cops in $G$ cannot have information regarding the robber's position that is unavailable to the cops in $H$.  To verify that this is the case, consider $x \in V(G)$ and $y \in V(H)$ such that $d_G(x,y) \le \ell$.  It follows that $d_H(f(x), f(y)) \le
		d_G(x,y) \le \ell$.  Since $f(y) = y$, we have $d_H(f(x), y) \le \ell$.  Therefore, whenever a cop $C$ sees the robber in $G$, the corresponding cop, $C'$, in $H$ also sees the robber.
		
		Since the robber is restricted to $H$ in both games, he will be captured in the first game on a vertex of $H$.  Therefore, in the second game, he will also be captured in $H$.  Therefore, $c_\ell (H) \le c_\ell (G)$.  It can be similarly shown that $c'_\ell(H) \le c'_\ell (G)$.
	\end{proof}
	
	It was shown in \cite{NR} that any isometric tree in $G$ is also a retract in $G$.  We, therefore, have the following corollary regarding isometric trees.
	
	\begin{corollary} \label{cor:isometrictree}
		If $T$ is an isometric tree in $G$ then $c_\ell (T) \le c_\ell (G)$.
	\end{corollary}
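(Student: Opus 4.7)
The plan is to derive the corollary as an immediate consequence of the two ingredients already in hand: the theorem from \cite{NR} stating that any isometric tree in $G$ is a retract of $G$, and Theorem~\ref{thm:retract}, which asserts that the $\ell$-visibility cop number is monotone with respect to retracts.

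Concretely, I would proceed as follows. First, let $T$ be an isometric tree in $G$. By the cited result of \cite{NR}, there exists a retraction $f \colon V(G) \to V(T)$, so that $T$ is a retract of $G$. Next, I would apply Theorem~\ref{thm:retract} directly to the pair $(G, T)$: since $T$ is a retract of $G$, we obtain $c_\ell(T) \le c_\ell(G)$. This is exactly the claim of the corollary.

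Because both nontrivial pieces of work have already been done, there is no real obstacle here; the corollary is a one-line consequence. The only thing worth flagging in the write-up is that the analogous inequality $c_\ell'(T) \le c_\ell'(G)$ for the ``seeing'' parameter also follows by the same argument, since Theorem~\ref{thm:retract} gives both bounds simultaneously. I would therefore keep the proof to a single short sentence, noting that the corollary follows by combining \cite{NR} with Theorem~\ref{thm:retract}.
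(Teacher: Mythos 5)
Your proposal is correct and matches the paper exactly: the authors also derive the corollary by citing \cite{NR} for the fact that an isometric tree is a retract and then applying Theorem~\ref{thm:retract}. Nothing further is needed.
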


For any $k \ge 1$, a $k$-dominating set of $G$  is a subset $S$ of $V(G)$ such that every vertex in $V(G) \setminus S$ is at most distance $k$ from some vertex of $S$.  The $k$-domination number of $G$, denoted $\gamma_k(G)$, is the minimum cardinality of a $k$-dominating set of $G$.  When $k=1$, this is simply referred to as the domination number of $G$.  We begin by stating some obvious, but useful inequalities.

\begin{prop}\label{prop:domSet}
	For any connected graph $G$,
\begin{enumerate}
\item  $c'_\ell(G) \le c_\ell(G)$;
\item $c'_\ell (G) \le \gamma_\ell (G)$;
\item $c_0 (G) \ge c_1(G) \ge c_2 (G) \ge \cdots \ge c_{diam(G)}(G) = c(G)$;
\item $c'_0 (G) \ge c'_1(G) \ge c'_2 (G) \ge \cdots \ge c'_{rad(G)}(G) =1$.
\end{enumerate}
\end{prop}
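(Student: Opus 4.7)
The plan is to dispatch each of the four parts with a short direct argument from the definitions; nothing deep is required, and I would present them in the order listed.

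For part (1), capture of the robber requires a cop to share a vertex with the robber, and distance $0$ certainly satisfies $0 \le \ell$; hence any winning $c_\ell(G)$-cop capture strategy also produces (at some round) a configuration in which the robber is seen, giving $c'_\ell(G) \le c_\ell(G)$. For part (2), I would take a minimum $\ell$-dominating set $S$ of $G$ and have $|S| = \gamma_\ell(G)$ cops occupy the vertices of $S$ in round $0$ and never move. By definition of $\ell$-dominating set, the vertex the robber chooses in round $0$ is within distance $\ell$ of some element of $S$, so the robber is seen immediately; hence $c'_\ell(G) \le \gamma_\ell(G)$.

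For part (3), the inequalities $c_\ell(G) \ge c_{\ell+1}(G)$ follow from a simulation argument: given a winning $k$-cop capture strategy at visibility $\ell$, the same $k$ cops can implement it at visibility $\ell+1$ by using only the sightings that would already have been available at visibility $\ell$, discarding the extra information; the resulting play still captures the robber. For the terminal equality $c_{diam(G)}(G) = c(G)$, I would observe that when $\ell \ge diam(G)$ every pair of vertices is at distance at most $\ell$, so a cop always sees the robber and the $\ell$-visibility game coincides with the classical perfect-information game. Part (4) is analogous: the same information-discarding simulation gives $c'_\ell(G) \ge c'_{\ell+1}(G)$, and for the terminal equality I would pick a vertex $v^*$ realizing the radius (so every vertex of $G$ lies within distance $rad(G)$ of $v^*$) and place a single stationary cop there, producing a trivial seeing strategy, so $c'_{rad(G)}(G) \le 1$; the reverse inequality is immediate since a graph on at least one vertex needs at least one cop to see the robber.

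The only mildly delicate point is formalizing the ``ignore extra information'' step in (3) and (4), which I would handle by specifying that the cops condition their moves only on the restricted sight-set available at the smaller visibility. Since the statement itself flags these bounds as ``obvious, but useful,'' I expect no genuine obstacle; the main value of the proof is to set the notation and simulation pattern used later in the paper.
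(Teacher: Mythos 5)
Your proof is correct, and in fact the paper offers no proof of this proposition at all --- the authors simply introduce it as ``some obvious, but useful inequalities,'' so your arguments are precisely the routine ones they leave implicit (the terminal equality $c_{diam(G)}(G)=c(G)$ is even stated explicitly as a remark in their Section 2.1, and your observation that a cop within distance $\ell+1$ can reconstruct exactly the information available at visibility $\ell$ is the right way to make the monotonicity in parts (3) and (4) rigorous). Nothing further is needed.
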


We next state some easy results for paths, cycles, complete bipartite graphs and complete graphs.  On a cycle $C_n$ with $n \geq 2\ell+2$, an $\ell$-visibility cop chooses a vertex to initially occupy and can see a total of $2\ell+1$ vertices (including the vertex occupied by the cop).  The cop first moves to an adjacent vertex and can now see a vertex that the cop could not previously see.  As there is a delay between the cops move and the robber's, the cop can effectively see $2\ell+2$ vertices of the graph during any given round.

\begin{prop}\label{cor:complete} For $\ell \geq 1$,
	\begin{enumerate}
		\item $c_\ell(P_n) = c_\ell'(P_n) = 1$ for any $n \geq 1$;
		\item $c_\ell(C_n) =  2$ for any $n \geq 4$;
		\item $c_\ell'(C_n) = \begin{cases} 2 & \text{ if } n \geq 2\ell+3, \\ 1 & \text{ otherwise;}\end{cases}$
		\item $c_\ell(K_{m,n}) = 2$ for any $2 \le m \le n$;
		\item $c_\ell'(K_{m,n}) = 1$ for any $2 \leq m \leq n$;
		\item $c_\ell(K_n) = c_\ell'(K_n) = 1$ for any $n \ge 1$.		
	\end{enumerate}
\end{prop}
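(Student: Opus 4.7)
The plan is to verify each of the six claims, invoking Proposition~\ref{prop:domSet} wherever possible to collapse the work to a single direction. Parts (1), (2), and (6) are routine. For paths, a single cop starting at an endpoint and marching toward the other end maintains the invariant that after round $i$ the robber is confined to the tail $\{v_{i+2},\ldots,v_n\}$, giving $c_\ell(P_n)=1$, and Proposition~\ref{prop:domSet}(1) then yields $c_\ell'(P_n)=1$. For cycles, $c_\ell(C_n)\ge c(C_n)=2$ by Proposition~\ref{prop:domSet}(3), and for the upper bound two cops placed on adjacent vertices and walking outwards in opposite directions force the robber into a strictly shrinking arc of the cycle. For $K_n$ the diameter is $1$, so any placed cop sees the whole graph and captures on its next move.

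For the bipartite cases (4) and (5), I would use $\mathrm{diam}(K_{m,n})=2$ for $m,n\ge 2$. The lower bound in (4) is $c(K_{m,n})=2$; for the upper bound, place one cop on each side of the bipartition, so that with $\ell\ge 1$ the two cops together dominate the graph, and the cop adjacent to the robber moves onto the robber's vertex to capture. For (5), a single cop placed on the $A$-side already sees all of $B$; if the robber lies in $A\setminus\{a\}$, the cop moves to any vertex $b\in B$ in round $1$, after which every vertex of $A$ is at distance $1$ from the cop, so the robber is seen.

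The substantive work is in (3). The upper bound $c_\ell'(C_n)\le c_\ell(C_n)=2$ is immediate from (2) and Proposition~\ref{prop:domSet}(1). When $n\le 2\ell+1$, every vertex of $C_n$ lies within distance $\ell$ of any fixed vertex; when $n=2\ell+2$, exactly one vertex is at distance $\ell+1$ from the cop, and one step in either direction reduces that distance to $\ell$, so the robber is revealed by the end of round $1$. Thus $c_\ell'(C_n)=1$ whenever $n\le 2\ell+2$. The remaining case is the lower bound $c_\ell'(C_n)\ge 2$ for $n\ge 2\ell+3$, for which I would exhibit an explicit evasion strategy for the robber against any single-cop strategy. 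Label $V(C_n)=\{0,1,\ldots,n-1\}$ and call the set of vertices at distance at least $\ell+1$ from the cop its \emph{blind arc}; this arc has $n-2\ell-1\ge 2$ vertices. Because the robber has full information and knows the cop's deterministic strategy in advance, the robber positions itself at round $0$ in the blind arc, and at each subsequent round moves to a vertex lying in the intersection of the current blind arc and the blind arc after the cop's next move. The key arithmetic is that consecutive cop positions shift the blind arc by at most one vertex along the cycle, so these intersections are always non-empty and within one step of the robber's current location; the robber thus remains blind forever, proving $c_\ell'(C_n)\ge 2$.

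The crucial case is $n=2\ell+3$, where the blind arc has exactly two vertices and the intersection of blind arcs of adjacent cop positions is a single vertex; the robber must anticipate the cop's direction at every turn and move to that unique safe vertex. The delicate check is that this move is always within the robber's one-step reach and that the invariant survives a cop reversal. All larger $n\ge 2\ell+4$ give the robber more slack and follow from the same scheme.
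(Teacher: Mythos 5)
Your proof is correct and consistent with the paper, which states this proposition without a formal proof and offers only the observation that on a cycle an $\ell$-visibility cop effectively surveys $2\ell+2$ vertices per round --- exactly your argument for $n \le 2\ell+2$. The remaining pieces you supply (the blind-arc evasion strategy giving the lower bound for $n \ge 2\ell+3$, and the routine verifications for paths, complete graphs, and complete bipartite graphs) are all sound and match the intended level of the result.
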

There are two cases of  particular note in Proposition~\ref{cor:complete}. While it has already been noted that seeing is not the same as capturing (as in the graph $C_4$), part 3 of Proposition~\ref{cor:complete} goes a step further. Consider the cycle $C_5$ in the $1$-visibility game.  The cop chooses a vertex to occupy and can ``see" a total of $3$ vertices (including the vertex he occupies).  In his first move, he moves to an adjacent vertex.  He can now see a vertex he could not previously see.  If the robber is not there, then the cop does not see the robber, but now the set of vertices that can possibly contain the robber, that is, the \emph{robber territory}, is only a single vertex. That is, the cop now knows exactly where the robber is located! In this case, ``locating'' is stronger than seeing.

It is also interesting to note that $c_\ell(G) = 1$ for any graph $G$ with a universal vertex. The graph $K_n$ is merely a classic example, and particularly interesting because $c_0(K_n) = \lceil  n/2 \rceil$.






In Section \ref{sec:see}, we investigate the relationship between $c_\ell'(G)$ and $c_\ell (G)$, and show that for cop-win graphs $c_\ell(G) - c'_\ell (G)$ is at most one.  In Section \ref{sec:trees} we see that  the difference between $c_\ell (G)$ and $c(G)$ can be arbitrarily large.  We now show that difference between both $\gamma(G) -c'_1(G) $ and $\gamma (G) - c_1(G)$ can be arbitrarily large.

In Figure~\ref{fig:Ex1}, we have a graph $G$ such that $c(G) = 1$, as a cop-win ordering is given by the order of the vertices. By placing a single cop on $v_3$, the cop can see every vertex except $v_5$ and $v_6$. If the robber has not already been seen, the cop's first move is to $v_7$, and the robber is seen. Thus, $c_1'(G) = 1$.  Since there is no dominating vertex, it is straightforward to see that $\gamma(G) = 2$. However, the next result shows that $c_1(G) = 2$, and therefore, seeing does not imply capture even on a cop-win graph.

\begin{figure}[htbp]		
	\[ \includegraphics[width=0.35\textwidth]{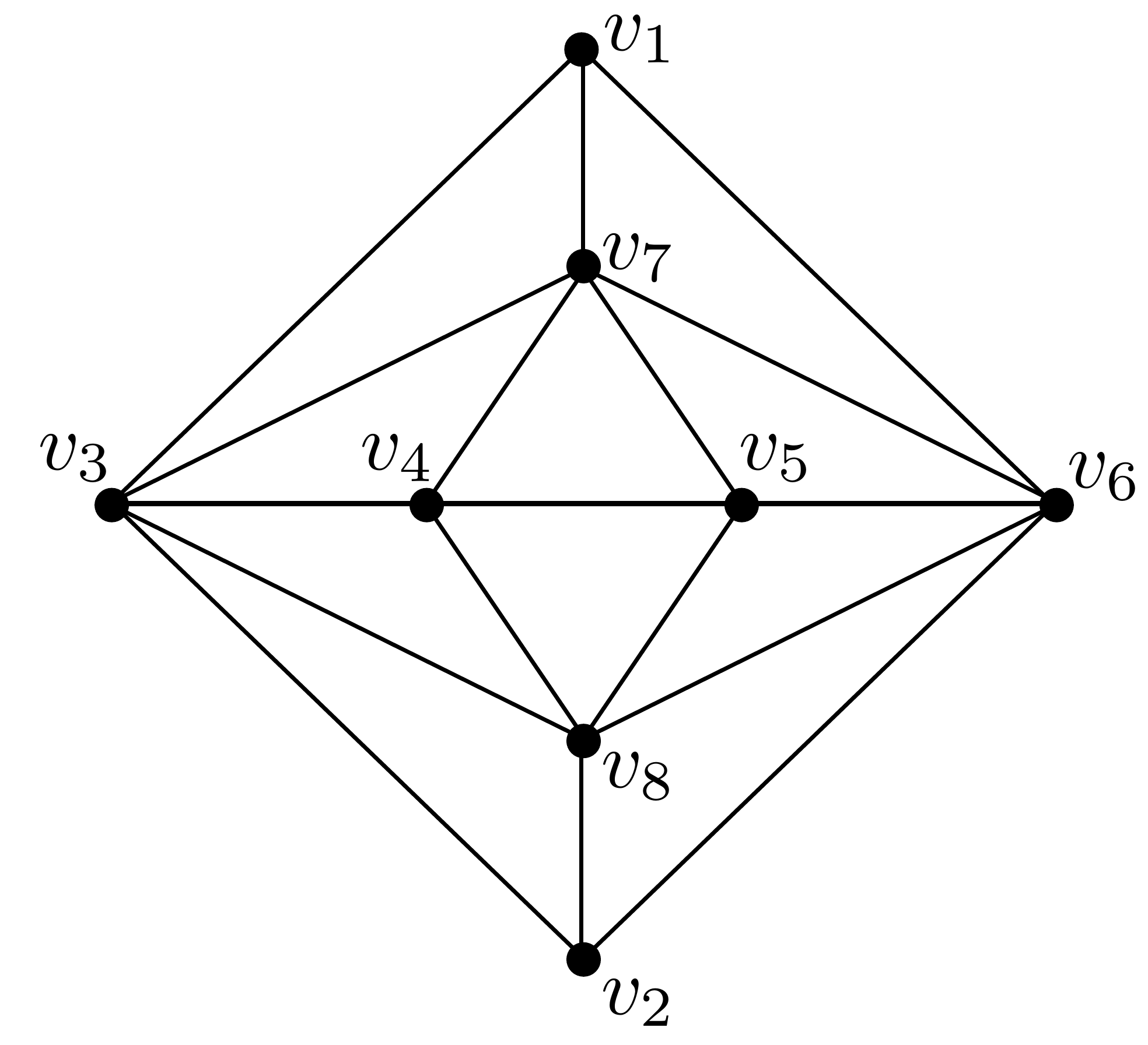} \]
	\caption{A graph $G$ in which $c(G)=1$, $c_1'(G) =1$, and $c_1(G) = 2$.}		
	\label{fig:Ex1}
	
\end{figure}

\begin{observation} \label{obs:Ex1} For the graph $G$ given in Figure~\ref{fig:Ex1}, $c_1(G)=2$. \end{observation}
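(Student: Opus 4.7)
The plan is to establish both bounds $c_1(G) \leq 2$ and $c_1(G) \geq 2$ separately. The upper bound should be immediate once we observe that $G$ has a dominating set of size $2$: by Proposition~\ref{prop:domSet}(2), we get $c_1'(G) \leq \gamma(G) = 2$, so two cops suitably placed can see the robber from round $0$ onward. From there I would argue that two cops with continuous visibility win on any cop-win graph, by following a corner ordering of $G$: one cop executes the standard cop-win shadowing strategy on the dismantling sequence $v_1, v_2, \ldots, v_n$ while the second cop is used only to preserve visibility when necessary. Because $c(G) = 1$ guarantees a single cop wins with full information, and the two-cop team always has full information, capture is forced.

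For the lower bound, I would provide an explicit robber strategy that evades one cop forever. The guiding invariant is that at the end of every round the cop-robber distance is at least $2$, so the cop never sees the robber. The robber strategy would be specified by a table: for each vertex $v$ the cop might occupy, I identify a nonempty robber-safe set $S_v \subseteq V(G) \setminus N[v]$ and verify the closure property that for every $r \in S_v$ and every legal cop move from $v$ to $v'$, the robber has a move from $r$ (possibly a pass) to some $r' \in S_{v'}$ with $d(v', r') \geq 2$. The robber's initial vertex is then chosen in $S_{c_0}$, where $c_0$ is the cop's starting vertex; such a vertex exists because $\gamma(G) = 2 > 1$ means no $N[v]$ covers $V(G)$.

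The main obstacle is checking this closure property case by case, since it depends on the precise adjacency structure of $G$ rather than on any abstract parameter. The natural strategy is to locate a short induced subgraph of $G$, most likely the piece around $v_5, v_6$ that cannot be seen from $v_3$, which acts as an escape corridor: whenever the cop steps toward one exit of this corridor, the robber shuttles to the other. Once that safe subgraph is pinned down, verifying the invariant reduces to a routine check of each cop move against the response prescribed for each robber vertex. Combining this robber strategy (which rules out $c_1(G) = 1$) with the two-cop winning strategy yields $c_1(G) = 2$.
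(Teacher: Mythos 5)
Your upper bound is fine, though over-engineered: once two cops occupy a dominating set, the robber must begin on a vertex adjacent to one of them and is captured on the very first cop move, so no shadowing or visibility-maintenance argument is needed. The real problem is your lower bound. The strategy you propose --- nonempty safe sets $S_v$ indexed by the cop's position, with the closure property that from every $r \in S_v$ and every cop move $v \to v'$ the robber can reach some $r' \in S_{v'}$ at distance at least $2$ from $v'$ --- cannot exist for this graph. Such a family of sets would give the robber a purely positional evasion strategy that keeps him at distance at least $2$ at the end of every round against \emph{any} cop behaviour, including a cop with perfect information; since capture requires the cop to be at distance at most $1$ at the start of his capturing move, this would show $c(G) \ge 2$, contradicting the fact that $G$ is cop-win. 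Your case-by-case check of the closure property is therefore guaranteed to fail somewhere. Relatedly, the stated goal that ``the cop never sees the robber'' is unachievable: the paper observes that $c_1'(G)=1$ (a cop starting at $v_3$ sees all but $v_5,v_6$, and one step to $v_7$ sees those), so one cop can always force a sighting.

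The lower bound must instead exploit the cop's \emph{information} deficit rather than a distance invariant: the robber will be seen, and may even be cornered, but the cop cannot always deduce which way he went. This is what the paper's proof does. It analyzes the endgame: capture can only occur with the robber on a corner ($v_1$ or $v_2$) and the cop on a dominating neighbour ($v_7$ or $v_8$, respectively), and tracing the moves backwards shows that on the cop's penultimate move there is a robber trajectory through $v_3$ or $v_6$ whose continuation to $v_1$ is indistinguishable (under $1$-visibility) from its continuation to $v_2$; the cop must commit to $v_7$ or $v_8$ without knowing which, and the robber takes the other exit. To repair your argument you would need to replace the positional safe sets with an adversary argument over the cop's observation history --- exhibiting, for each cop strategy, two consistent robber trajectories that no single final approach covers --- which is a genuinely different (and here essential) mechanism.
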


\begin{proof} Let $G$ be the graph shown in Figure~\ref{fig:Ex1} and for a contradiction, suppose $c_1(G)=1$.  Consider the final move for the cop:  the robber must have occupied a vertex $x$ and the cop occupied a vertex $y$ such that $N[x] \subseteq N[y]$ (i.e. $x$ was a corner).  Then any vertex to which the robber can move, the cop can also move.  (If $N[x] \not\subseteq N[y]$, then the robber could have moved to a vertex $z \notin N[y]$ and avoided capture in the subsequent round.)  Note that $G$ has two corners, $v_1$ and $v_2$.  Without loss of generality, assume the robber occupied $v_1$ immediately prior to capture and the cop occupied $v_7$.  In the previous round, the cop must have occupied $v_4$ or $v_5$ before his penultimate move to $v_7$ (if the cop had occupied $v_3$ or $v_6$, then he could have moved to $v_1$ instead of $v_7$ and captured the robber earlier).  Suppose the cop occupied $v_4$ before moving to $v_7$.  After the cop  moved to $v_4$, the robber moved to $v_1$.  In the previous round, the robber must have occupied a vertex of $N[v_1]$.  If the robber had occupied $v_3$ or $v_6$ then he could have moved to either $v_1$ or $v_2$ and the cop would not know to which vertex the robber had moved.  Consequently, the cop would not know whether to move to $v_7$ or to $v_8$ in his penultimate move, which yields a contradiction. If the robber occupied $v_1$ or $v_7$, then he could have moved to $v_6$ instead of $v_1$.  In this case, even assuming the cop could see that the robber moved to $v_6$, the cop cannot capture the robber in two moves, which is a contradiction.  Hence, $c_1(G) >1$.
	
Since $\gamma(G)=2$, two 1-visibility cops can occupy dominating vertices of $G$ and capture the robber immediately.  \end{proof}

We next provide a family of graphs that shows that the difference between $\gamma$ and $c_1$ can be arbitrarily large.  We begin with $k$ copies of the graph $G$ given in Figure~\ref{fig:Ex1}, for finite $k$.  Let $G_k$ be the graph obtained by merging vertices of degree $5$ from $k$ copies of $G$ as shown in Figure~\ref{fig:Ex2}.

\begin{figure}[htbp]
	
	\[ \includegraphics[width=0.8\textwidth]{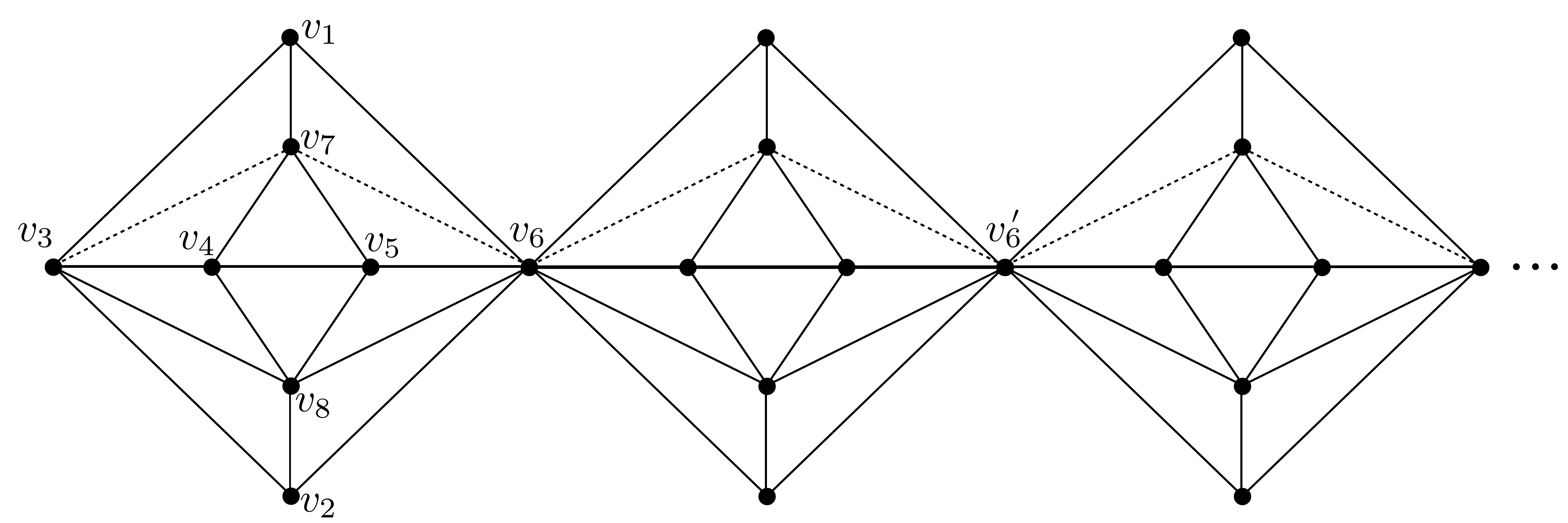} \]
	
	\caption{A graph $G_k$ for which $\gamma(G_k) = k+1$, but $c'_1(G_k)=1$ and $c_1(G_k)=2$.}
	
	\label{fig:Ex2}
	
\end{figure}

Clearly $\gamma(G_k)=k+1$.  However, we can easily observe that $c_1'(G_k) = 1$.  Initially place a $1$-visibility cop at an endpoint of the path indicated with dotted edges (labeled as $v_3$ in Figure~\ref{fig:Ex2}).  If the cop does not see the robber, then the cop moves along the path indicated by dotted edges.  If the robber was initially located at $v_5$ or $v_6$, then the cop will see the robber.  Otherwise, the robber was not initially in the leftmost ``copy" of $G$.  A similar argument shows that if the cop continues to move along the path indicated by dotted edges, he will eventually see the robber.  With the addition of a second 1-visibility cop, this strategy can easily capture the robber: a 1-visibility cop occupies $v_6$ while the other occupies $v_3$.  If the robber is located in the leftmost ``copy'' of $G$, he is captured immediately.  Otherwise, one cop moves from $v_3$ to $v_2$ and then to $v_6$.  After both cops occupy $v_6$, one cop moves to $v_6'$ (labeled in Figure~\ref{fig:Ex2}) and the cops follow the same strategy to clean the next ``copy'' of $G$.   The graph $G$ is a retract of $G_k$, since every unlabelled can be mapped to $v_6$, while every labelled vertex can be  mapped to itself. By Theorem~\ref{thm:retract}, we know $c_1(G_k) \geq c_1(G) = 2$.  Combined with the above argument, this yields $c_1(G_k)=2$.

Since $c_1(K_n)=1$  and there exist trees with arbitrarily large 1-visibility cop number, the 1-visibility cop number is not closed under minors in general.  However graph $G_k$ does highlight the existence of a minor relationship, based on cut-vertices, cut-edges, and isometric subgraphs.

\begin{observation}  Let $v$ be a cut-vertex or an endpoint of a cut-edge in graph $G$.  For any $\ell \ge 1$,  we have $\displaystyle  c_\ell(G) \leq 1+ \max_{1 \leq i \leq \alpha} \{c_\ell (H_i)\}$, where $H_1,H_2,\dots, H_\alpha$ are the subgraphs induced by the deletion of $N_\ell[v]$ from $G$.
\end{observation}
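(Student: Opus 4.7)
The plan is to exhibit an explicit $\ell$-visibility strategy using $k + 1$ cops, where $k = \max_i c_\ell(H_i)$. One cop, which I will call the \emph{guard}, is stationed at $v$ at the start and remains there for the entire game; the remaining $k$ cops, called \emph{workers}, are tasked with pursuing the robber across the components.

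First I would verify the sentinel property: since the guard sits at $v$, he sees every vertex of $N_\ell[v]$, and because $v$ is a cut-vertex or an endpoint of a cut-edge with $\ell \geq 1$, every walk in $G$ joining a vertex of one component $H_i$ to a vertex of a different component $H_j$ must include a vertex of $N_\ell[v]$. Hence any attempt by the robber to switch components is observed by the guard, and while the robber is unseen he is confined to a single $H_i$.

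Next I would describe the workers' strategy as a sequence of phases indexed by $i = 1, 2, \ldots, \alpha$. In phase $i$, the workers travel through $N_\ell[v]$ into $H_i$ and then execute a fixed winning $c_\ell(H_i)$-cop strategy on $H_i$, which is possible since $c_\ell(H_i) \leq k$. If the robber remains in $H_i$ throughout the phase, he is captured by the definition of $c_\ell(H_i)$; otherwise, the phase either completes without sighting the robber (so he was not in $H_i$ during this time) or is aborted the moment the guard observes the robber transit through $N_\ell[v]$, at which point the workers re-target the component the robber enters next.

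The main obstacle I would need to handle is a careful termination argument ruling out an adversarial robber who indefinitely shuttles through $N_\ell[v]$ to frustrate the workers. The key point is that every such shuttle is observed by the guard and yields perfect information about the robber's current component at the moment of transit; between transits, the robber is confined to a single $H_j$, and the corresponding winning sub-strategy terminates in finite time. Combining these facts with a scheduling argument on the finite graph $G$ shows that the robber must eventually be pinned inside some $H_i$ long enough for the sub-strategy there to succeed, yielding capture in finite time.
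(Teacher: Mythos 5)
Your strategy is the same one the paper uses: one cop parked on $v$, the remaining $k=\max_i c_\ell(H_i)$ cops sweeping the components of $G - N_\ell[v]$ one at a time. For $\ell = 1$ your argument is essentially complete, and in fact tighter than you state: since $N_1[v] = N[v]$, the guard does not merely \emph{see} a robber who enters $N_1[v]$ at the end of a round --- he is adjacent to him and captures him on the next cop move. So for $\ell=1$ the robber genuinely cannot leave his component, there is no shuttling to worry about, and the fixed sweep order terminates. This is exactly the argument the paper gives (its justification is phrased entirely in terms of a $1$-visibility cop).

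For $\ell \ge 2$, however, the step you defer to ``a scheduling argument'' is precisely where the difficulty lives, and neither you nor the paper supplies it. A vertex of $N_\ell[v]$ at distance $2,\dots,\ell$ from $v$ is seen by the guard but is not in his closed neighbourhood, so a robber transiting through such vertices is observed yet cannot be intercepted by the guard; ``any attempt to switch components is observed'' does not yield ``the robber is confined to a single $H_i$.'' Being seen only collapses the robber territory to a known component; it does not stop the robber from entering an already-cleaned $H_j$, diving out of sight before the workers (who may be far away) arrive, and repeating. Each sub-strategy on $H_j$ is winning only under the hypothesis that the robber stays in $H_j$, so its finite termination gives no progress measure against a robber who always flees in time, and no potential function is exhibited that rules out indefinite shuttling. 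To close the gap you would need either a genuinely different placement or behaviour for the guard (e.g.\ exploiting the sighting to block the transit corridor), or an explicit argument that the sightings force eventual confinement; as written, the termination claim is asserted rather than proved.
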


If $v$ is a cut-vertex (or endpoint of a cut-edge) in a graph $G$, then a 1-visibility cop can occupy $v$.   The robber is then restricted to some subgraph $H_i$.  The remaining $\max_{i \in [\alpha]} \{c_1(H_i)\}$ cops then search each subgraph $H_1,H_2,\dots, H_\alpha$ one by one until the robber is caught.

\section{Monotonicity}\label{sec:mono}

Many cops and robber games involve the idea of monotonicity. Essentially a strategy is monotonic if, once a portion of the graph is known to be free of the robber, then the cops have a strategy to guarantee that the robber cannot re-enter such a portion. Edge-searching requires no extra cops to guarantee a monotonic strategy; alternatively, we say that edge-searching is monotonic~\cite{LaPa93, BiSe91}.  However, connected edge-searching requires strictly more cops to maintain monotonicity; that is, connected edge-searching is not monotonic~\cite{YaDyAl09}.

In the classical Cops and Robber game, the robber is visible, and so the idea of monotonicity has not been of interest to researchers.  However, in the zero-visibility Cops and Robber game, which conceptually is quite similar to edge-searching, monotonicity becomes a natural concern. It was shown in \cite{Pathwidth} that the zero-visibility cop and robber game is clearly not {\em strictly} monotonic (due to the usefulness of vibrating, as described in Section~\ref{intro}).  It is also not {\em weakly} monotonic -- even allowing the recontamination associated with vibrations, more cops are required to show that the size of the robber territory (considered immediately after the cops' move) is always non-increasing over time.

Certainly, there are similar considerations in $\ell$-visibility Cops and Robber. Let $S_k$ denote the robber territory immediately after the cops have taken their $k$th move, but before the robber moves. We define the {\em (weakly) monotonic $\ell$-visibility cop number} of a graph $G$,  $mc_\ell(G)$, to be the minimum number of cops required to guarantee capture of the robber in $G$ with the restriction that $S_{k+1} \subseteq S_{k}$ for all moves after the initial placement.

Consider a perfect binary tree $T$ of depth $3$. Let every edge of $T$ be subdivided $2\ell+1$ times; call the resulting tree $\tell$. Let the root of $\tell$ be $r$, the ``left'' vertex of depth $2\ell +2$ be $a$, the right descendent of $a$ that is $2\ell+2$ levels below be $b$, and the vertices on the path from $a$ to $b$ be labelled $v_1$, $v_2$, \ldots, $v_{2\ell+1}$, with $v_1$ adjacent to $a$ and $v_{2\ell+1}$ adjacent to $b$, as shown in Figure~\ref{monotonicfigure}.

\begin{figure}[ht]
\begin{center}
\includegraphics[scale=0.365]{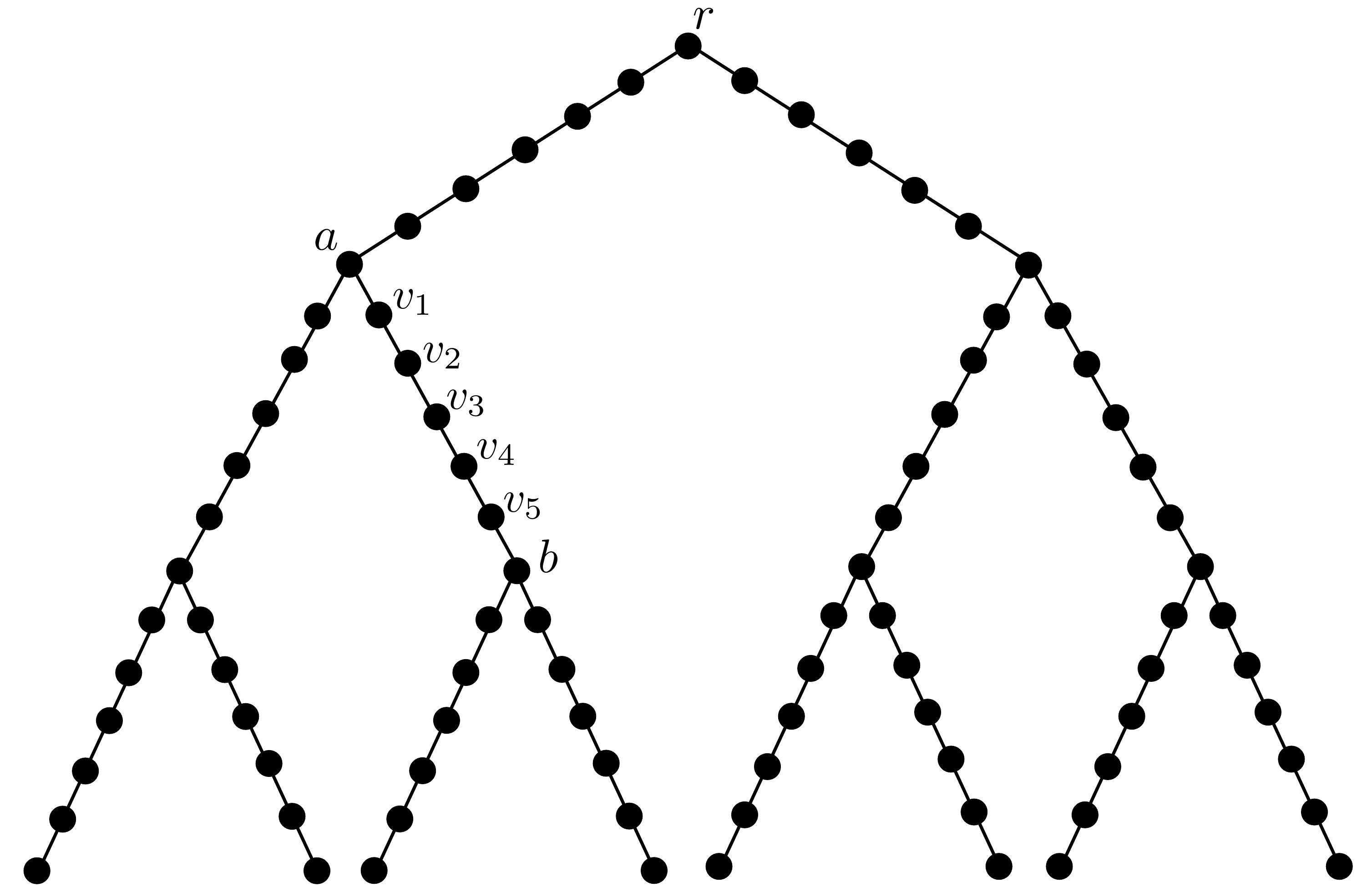}
\end{center}
\caption{The tree $T_\ell$ when $\ell = 2$.}

\label{monotonicfigure}
\end{figure}

\begin{theorem} For $\ell \geq 1$, $c_\ell(\tell) = 2$, and $mc_\ell(\tell)=3$.
\end{theorem}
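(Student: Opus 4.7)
The theorem has four parts — the inequalities $c_\ell(T_\ell) \leq 2$, $c_\ell(T_\ell) \geq 2$, $mc_\ell(T_\ell) \leq 3$, and $mc_\ell(T_\ell) \geq 3$ — and my plan is to dispatch the first three with explicit strategies and to reserve the bulk of the argument for the last, which is the main obstacle.

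For $c_\ell(T_\ell) \leq 2$ I would give a non-monotonic sweep. Both cops start at $r$ and descend as a pair; when they reach a degree-$3$ original vertex $v$ of $T_\ell$, one cop holds $v$ as a temporary blocker while the other completely sweeps one of the two subtrees hanging off $v$ down to a leaf, and then both cops turn around and sweep the sibling subtree. During the second pass the first subtree may be recontaminated, but this is permitted in the non-monotonic setting and the robber is eventually cornered at a leaf of a deepest subtree. For $mc_\ell(T_\ell) \leq 3$ I would describe a monotonic analogue that parks a permanent sentinel at each of the three branching levels of $T_\ell$: one cop near $r$, one at a depth-$1$ original vertex once its sibling subtree has been swept clean, and a third sweeping the remaining depth-$2$ subtree. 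Because each sibling subtree is guarded by a cop at its root, no clean vertex is recontaminated.

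For $c_\ell(T_\ell) \geq 2$ I would exhibit a robber strategy against a single cop. The tree $T_\ell$ contains, as a (necessarily isometric) subtree, a spider centered at a degree-$3$ original vertex with three legs each of length at least $2\ell + 2$. On this spider a single cop is too slow to cover all three legs: whenever the cop is in one leg at distance greater than $\ell$ from the center, the robber pivots through the center into a different leg, and because edges have length $2\ell + 2 > 2\ell$, the cop cannot be simultaneously within visibility of both the center and the leaf of any particular leg, so such a pivot is always available. An application of Corollary~\ref{cor:isometrictree} then yields $c_\ell(T_\ell) \geq 2$.

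The main obstacle is $mc_\ell(T_\ell) \geq 3$, which I would prove by contradiction. Suppose a monotonic $2$-cop cleaning strategy exists. The key structural fact is that each original edge of $T_\ell$ has length $2\ell + 2$, which strictly exceeds the visibility diameter $2\ell$ of any single cop, so one cop cannot simultaneously prevent the robber from traversing both sides of a branching vertex from a ``remote'' position. Tracking the robber territory $S_t$ across rounds, I would identify a ``critical moment'' after which $S_t$ is genuinely distributed across three independent branches emanating from branching vertices at the three depth-levels (root, depth-$1$, depth-$2$). At such a moment preserving monotonicity requires three simultaneous guards at disjoint branching sites; since only two cops exist, one of these guard-positions must be vacated, and the structural fact above then produces a walk from the robber territory into a previously clean region disjoint from both cops, i.e.\ recontamination, contradicting monotonicity. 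Formalizing ``critical moment'' and verifying that three guards really are necessary, via a careful case analysis of the possible pairs of cop positions relative to the three nested branching vertices, is the technical core of the argument.
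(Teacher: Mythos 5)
Your overall architecture (explicit strategies for the upper bounds, an isometric-spider lower bound via Corollary~\ref{cor:isometrictree}, and a contradiction argument for $mc_\ell(\tell)\geq 3$) matches the paper's, but two of the four pieces have genuine gaps. The serious one is the lower bound $c_\ell(\tell)\geq 2$. Your justification --- the legs have length $2\ell+2>2\ell$, so the cop cannot simultaneously see the centre and a leaf, hence ``a pivot is always available'' --- uses the wrong threshold and proves too much: in a spider with three legs of length $2\ell+1$ a cop surveilling a leaf also cannot see the centre, yet by Theorem~\ref{thm:treek} that spider has $\ell$-visibility cop number $1$ (it contains no element of $\mathcal{T}_{2,\ell}$), and one can clean it directly with one cop. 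What the length $2\ell+2$ actually buys is finer: to surveil a leaf the cop must occupy a vertex at distance at least $\ell+2$ from the centre $q$, and therefore must sit at distance at least $\ell+1$ from $q$ for \emph{two consecutive} cop-positions; that two-round window is exactly what the robber needs to step onto $q$ and off into a third leg while staying at distance at least $\ell+1$ from the cop throughout. With legs of length $2\ell+1$ the cop can touch depth $\ell+1$ for a single round and retreat, and no crossing is possible. Moreover the pivot is not available ``whenever the cop is at distance greater than $\ell$ from the centre'': it takes the robber roughly $2\ell+2$ rounds to execute and must be timed against the cop's excursion. The paper sidesteps all of this by deferring the lower bound to Lemma~\ref{lem:copnumberk} together with Corollary~\ref{cor:isometrictree}.

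The second gap is in your $2$-cop upper bound. If the blocker genuinely abandons the branch vertex so that ``during the second pass the first subtree may be recontaminated,'' you have not shown termination: re-cleaning the first subtree can recontaminate the second, and ``the robber is eventually cornered'' is asserted, not argued. Separately, a single cop cannot ``completely sweep'' a subtree hanging off a depth-one vertex just by walking ``down to a leaf'': that subtree branches again at a depth-two vertex $b$ into two legs of length $2\ell+2$, and managing that branch point is precisely where the difficulty lies. The paper's strategy handles both issues by placing the guard not at the branch vertex $a$ but vibrating between $v_\ell$ and $v_{\ell+1}$ on the $a$--$b$ path, so that every vertex of that path except $b$ is surveilled, the first-cleaned branch stays clean, and the only recontamination is of $b$ itself (which is exactly the non-monotonicity the theorem then exploits). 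Finally, you rightly identify $mc_\ell(\tell)\geq 3$ as the hard part, but your ``critical moment'' argument is a sketch with the case analysis deferred; in fairness, the paper also only asserts that this direction is straightforward, so you are no worse off there than the published proof.
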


\begin{proof}
One $\ell$-visibility cop is not sufficient to guarantee capture of the robber (see Section \ref{sec:trees} for further detail).  We present a (non-monotonic) strategy for two cops, $C_1$ and $C_2$, to capture the robber.

First, place one cop on each of the leaves to the left of vertex $a$. They may both move up the tree, eventually both ending at $a$. Then, cop $C_1$ moves from $a$ to $v_{\ell}$, and in subsequent moves will vibrate between $v_{\ell}$ and $v_{\ell+1}$ until all vertices below $b$ are robber-free. Thus, $C_1$ will guard the branch of $\tell$ to the left of $a$, as in every second move, $a$ will be surveilled; that is, $C_1$ will see the robber and be able to guarantee capture, by Theorem~\ref{thm:ellchord}.  Then cop $C_2$ may move to the left descendant of $b$ that is $\ell$ levels above a leaf, and proceed to the corresponding right descendant of $b$. This cleans all the descendants of $a$, at which point both $C_1$ and $C_2$ may proceed to $r$, and clean the right half of $\tell$ in the reverse of this strategy. Thus, $c_\ell(\tell) = 2$.

However, this is not a monotonic strategy; the vertex $b$ is known to be robber-free ($C_2$ even occupies it), but after that, $b$ is unseen for three successive moves, allowing recontamination. This may be repaired by introducing a third cop, $C_3$. After cleaning the left branch of $a$, while $C_1$ vibrates between $v_{\ell}$ and $v_{\ell+1}$, $C_3$ may move to vertex $b$, while $C_2$ proceeds as in the previous strategy; again after cleaning both branches of $a$, all cops may move to $r$ and repeat the reverse of the strategy. This shows that $mc_\ell(\tell) \leq 3$, and it is straightforward to show that there is no $2$-cop strategy that preserves monotonicity.
\end{proof}

\section{Seeing vs Capturing}\label{sec:see}

We now attempt to answer the question ``For which graphs does seeing imply capture?"   We have seen in Proposition \ref{cor:complete} that seeing does not imply capture for complete bipartite graphs and some cycles.  However, in these examples,  the cops could see almost the entire graph from fixed starting positions.  In general, this will not be the case, and the searching phase of the cops' strategy may be the phase that is the most resource intensive.

First, we investigate this problem for the $\ell$-visibility game when $\ell \ge 2$.  We then examine the problem on chordal and cop-win graphs.

\begin{theorem}\label{thm:see}
For any graph $G$ and $\ell \ge 2$, either $c_\ell'(G) = c_\ell (G)$ or $c(G) \le c_\ell (G) \le c(G) + 1$.
\end{theorem}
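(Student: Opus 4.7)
The plan is to establish the uniform bound $c_\ell(G) \le \max\{c_\ell'(G),\, c(G)+1\}$, from which the stated dichotomy follows by a short case analysis. Let $N := \max\{c_\ell'(G),\, c(G)+1\}$. With $N$ cops I would run a two-phase strategy. In Phase 1, $c_\ell'(G)$ of the cops execute an optimal seeing strategy and the remaining $N - c_\ell'(G)$ cops stay put; by the definition of $c_\ell'$, at some round some cop $C^*$ winds up within distance $\ell$ of the robber. In Phase 2, $C^*$ is designated the \emph{shadow} and commits itself to maintaining $d(C^*, R) \le \ell$ in every subsequent round, while the remaining $N-1 \ge c(G)$ cops execute a winning classical strategy on $G$, using the shadow's continual observations as their source of visibility.

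The central technical claim, which is where the hypothesis $\ell \ge 2$ enters, is that a single shadow cop can maintain end-of-round visibility once it has been attained. Assume inductively that $d(C^*, R) = k$ with $1 \le k \le \ell$ at the end of round $t$ (the case $k = 0$ is capture). In round $t+1$, $C^*$ steps one edge along a shortest path toward $R$, so after the cops' move $d(C^*, R) \le k - 1 \le \ell - 1$. The robber then moves one edge, giving $d(C^*, R) \le \ell$ at the end of round $t+1$. The assumption $\ell \ge 2$ supplies the one-edge slack that absorbs the robber's response; for $\ell = 1$ the same step either captures the robber immediately or the robber escapes to distance $2$, losing visibility, which is precisely why the theorem cannot assert this bound there.

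With perpetual visibility of $R$, the remaining $c(G)$ cops have the full information of the classical game at their disposal. After a finite number of introductory rounds they move into the starting configuration of any fixed winning classical strategy (the shadow continuing to track $R$ throughout), and the classical strategy then captures the robber in finitely many rounds. This establishes $c_\ell(G) \le N$. If $c_\ell'(G) \le c(G)+1$ then $N = c(G)+1$, and combining with $c_\ell(G) \ge c(G)$ from Proposition~\ref{prop:domSet} gives $c(G) \le c_\ell(G) \le c(G)+1$; otherwise $c_\ell'(G) \ge c(G)+2$ and $N = c_\ell'(G)$, and combining with $c_\ell'(G) \le c_\ell(G)$ from the same proposition forces $c_\ell'(G) = c_\ell(G)$.

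The hard part will be the handover from Phase 1 to Phase 2. The definition of $c_\ell'$ only guarantees visibility at \emph{some} moment of some round, whereas the maintenance argument above requires $d(C^*, R) \le \ell$ at the \emph{end} of a round. My plan is to argue that for $\ell \ge 2$ there is enough slack for $C^*$ to spend a bounded number of follow-up moves stepping toward the robber's last known position, upgrading transient visibility into end-of-round visibility, after which the induction takes over. Verifying this transition carefully, and checking that the shadow's autonomous movement does not prevent the classical cops from executing their chosen winning strategy, are the technical details that would occupy the bulk of a careful write-up.
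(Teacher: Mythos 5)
Your overall architecture --- taking $m=\max\{c_\ell'(G),\,c(G)+1\}$ cops, designating one tracker after the first sighting, and closing with the same two-case analysis --- matches the paper's. The gap is exactly where you suspect it is, and your proposed repair does not close it. The first sighting may occur mid-round: a cop moves to within distance $\ell$ of the robber's current vertex, and the robber then retreats to distance $\ell+1$ on his move. From that point your tracker, stepping along a geodesic toward the robber's last known position, gains one unit per round while the robber's reply costs that unit back, so the end-of-round distance can remain $\ell+1$ indefinitely (picture the robber running one step ahead of the pursuing cop around a long cycle). No bounded number of follow-up moves by a single cop forces end-of-round visibility, so your Phase 2 --- a classical strategy played against the \emph{actual} robber using end-of-round sightings --- never receives the information it needs, and your induction has no valid base case in this situation.

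The paper's fix is to change the target rather than to upgrade the visibility. Define the robber's \emph{shadow} to be the vertex he occupied at the end of the previous round. The tracking cop always moves along a geodesic toward the shadow; an easy induction shows it stays within distance $\ell$ of the shadow at the end of every round, hence within distance $\ell$ of the robber immediately \emph{after the cops' move} in every round, and this mid-round sighting suffices for the whole team to learn each new shadow position in time for their next move. The other $m-1\ge c(G)$ cops then run a classical winning strategy against the shadow, which is a legal robber trajectory. Once a cop occupies the shadow it is within distance one of the robber; by continuing to step onto the shadow it remains within distance two, and here the hypothesis $\ell\ge 2$ finally yields genuine end-of-round visibility of the actual robber, after which $m-1\ge c(G)$ cops finish with a classical strategy against him. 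Replacing your maintenance claim and Phase 2 with this two-stage shadow argument repairs the proof; your concluding case analysis can stand as written.
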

\begin{proof}
Consider a graph $G$ and a set of cops of size $m$ where $m = \max \{c'_\ell (G), c(G) +1\}$ and $\ell \ge 2$.  Since $m \ge c'_\ell(G)$, there is a strategy on $G$ for the $m$ cops to eventually see the robber.  Therefore, at some point in the game, a cop $C$ is within distance $\ell$ of the robber.  Suppose this occurs in round $t'$.

In any given round, if the robber occupies vertex $r$ at the end of the round, we let $r'$ denote the vertex occupied by the robber at the end of the previous round.  If we associate the robber with vertex $r$, then we refer to $r'$ as the robber's shadow.    Starting in round $t'+1$,  we assume that $C$ moves on a geodesic between himself and the robber's shadow until indicated otherwise.  As a result, $C$  maintains a distance of at most $\ell$ from the robber's shadow at all times.  Now, the other $m-1$ cops have perfect information as to the position of the robber's shadow.  Since $m-1 \ge c(G)$, these $m-1$ cops can capture the robber's shadow.

Since the robber and his shadow are distance at most one from each other the robber has either been captured, or there is a cop $C'$ adjacent to the robber.  Assume the latter.  Now, $C'$ will continue to move onto the robber's shadow in each round,  and $C$ will abandon his previous strategy and join the other $m-2$ cops.  From this point on, $C'$ is always withing distance two of the robber, and therefore, the other $m-1$ cops have perfect information as to the position on the robber.  Since $m-1 \ge c(G)$, these $m-1$ cops can now capture the robber. Hence, $c_\ell (G) \le  m$.

If $m = c(G) +1$, then $c(G) \le c_\ell (G) \le c(G)+1$.  If $m = c_\ell (G)$, then  $c'_\ell (G) \le c_\ell (G) \le c'_\ell (G)$.  The result follows.
\end{proof}

It is well-known that chordal graphs provide a large family of graphs $G$ for which $c(G) = 1$ \cite{NW}. However it is easy to construct examples of chordal graphs that require at least two cops to capture the robber with limited visibility (see Theorem \ref{thm:treek}). However, in this section we show that for $\ell$-visibility Cops and Robber, chordal graphs do have the property that if at any point there is a cop at distance no more than $\ell$ from the robber, then that cop can eventually capture the robber. We begin by stating several useful definitions from~\cite{west}.  A vertex of graph $G$ is {\it simplicial} if its neighbourhood in $G$ induces a clique.  A {\it simplicial elimination ordering} is an ordering $v_n,\dots,v_1$ for deletion of vertices so that each vertex $v_i$ is a simplicial vertex of the remaining graph induced by $\{v_1,\dots,v_i\}$.  It is well-known that $G$ has a simplicial elimination ordering if and only if $G$ is chordal.

	In the following proof, using zero-visibility as a base case, we show that if a cop sees the robber, then in a finite number of subsequent rounds the distance between that cop and the robber decreases.
		
	\begin{theorem}\label{thm:ellchord}
		If $G$ is a chordal graph and  $\ell \geq 0$, then $c_\ell'(G) = c_\ell(G)$. Furthermore, once the robber has been seen by a cop, that cop can capture the robber.
	\end{theorem}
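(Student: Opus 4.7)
Since Proposition~\ref{prop:domSet} already gives $c_\ell'(G)\le c_\ell(G)$ for every graph, the equality for chordal $G$ will follow from the ``Furthermore'' clause: in a chordal graph, a single cop who is within distance $\ell$ of the robber can capture the robber alone. My plan is to focus on this second statement; once it is proved, any team of $c_\ell'(G)$ cops that sees the robber can hand the pursuit off to the sighting cop, yielding $c_\ell(G)\le c_\ell'(G)$.

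I would induct on the visibility parameter $\ell$, using zero-visibility as the base case. For $\ell=0$, ``seeing'' is equivalent to occupying the same vertex as the robber, so the cop has already captured and there is nothing to show. For the inductive step with $\ell\ge 1$, suppose a cop $C$ currently has $d(C,R)\le \ell$. I would have $C$ move one step per round along a shortest path from his current vertex to the robber's most recent visible position (the robber's ``shadow''). Because the robber moves at most one edge per round and the shadow is always within $\ell$ of $C$, this chase keeps $d(C,R)\le \ell$ throughout, so $C$ retains sight of $R$ at every step and the strategy can actually be executed.

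The heart of the argument is to show that $d(C,R)$ strictly drops below $\ell$ in finitely many rounds. Once this happens, the cop is within distance $\ell-1$ of the robber, and since greater visibility can only help the pursuer, the strategy certified by the inductive hypothesis (under $(\ell-1)$-visibility) can be run inside the $\ell$-visibility game to complete the capture. To force the distance down, I would exploit the simplicial elimination ordering $v_n,v_{n-1},\ldots,v_1$ of the chordal graph $G$: whenever the robber tries to preserve distance by stepping to a neighbor, chordality (the absence of induced cycles of length at least $4$) prevents the robber from ``skating around'' the advancing cop, and any clique neighborhood the robber enters at a simplicial vertex forces a distance decrease because the cop can follow into that clique.

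The main obstacle I anticipate is making this ``no skating around the cop'' intuition precise in a finite number of rounds rather than merely asymptotically. The cleanest route, which I would take, is to piggyback on the dismantlable structure of chordal graphs and on the classical Nowakowski--Winkler cop-win strategy, which in a dismantlable graph captures a visible robber with a single cop and can be executed so that $d(C,R)$ is monotonically non-increasing, strictly decreasing at least once in every bounded window. Then I would verify that executing this strategy requires only information available under $\ell$-visibility: since the cop never lets the distance grow beyond $\ell$, the robber's exact position is always seen, so the perfect-information strategy is legal in the $\ell$-visibility game. With visibility preserved forever and the distance strictly decreasing infinitely often, capture occurs in finite time, giving $c_\ell(G)\le c_\ell'(G)$ and completing the proof.
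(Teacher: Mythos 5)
Your overall architecture matches the paper's: induct on $\ell$ with the zero-visibility case as the base, reduce everything to the ``Furthermore'' clause via Proposition~\ref{prop:domSet}, keep the cop within distance $\ell$ by advancing toward the robber's last seen position, and argue that the distance must eventually drop to $\ell-1$ so the inductive hypothesis takes over. The problem is the step you yourself flag as the heart of the argument. The route you say you would actually take --- piggybacking on dismantlability and the Nowakowski--Winkler strategy, claiming it can be run with $d(C,R)$ non-increasing and strictly decreasing in every bounded window using only $\ell$-visible information --- is false, and provably so from this paper's own examples. If that claim held, it would show that ``seeing implies capture by a single cop'' on \emph{every} cop-win graph; but the graph of Figure~\ref{fig:Ex1} is cop-win with $c_1'(G)=1$ and $c_1(G)=2$ (Observation~\ref{obs:Ex1}), so a single $1$-visibility cop who sees the robber there still cannot capture him. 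Dismantlability is strictly weaker than chordality for this purpose, so no argument that uses only the cop-win structure can close the gap.

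What is actually needed, and what the paper supplies, is a use of the simplicial elimination ordering $v_n,\dots,v_1$ that converts the cop's per-round gain into a \emph{permanent} decrease in distance. The robber's index in that ordering cannot increase forever, so at some point he moves $v_x \to v_y \to v_z$ with $y>x$ and $y>z$; since $v_x$ and $v_z$ are both lower-indexed neighbours of $v_y$, which is simplicial in the subgraph induced by $\{v_1,\dots,v_y\}$, they are adjacent. Thus the robber's two moves net him at most one edge of displacement from $v_x$, while the cop has meanwhile advanced a step toward $v_x$, and the distance drops to $\ell-1$. Your first instinct (``any clique neighbourhood the robber enters at a simplicial vertex forces a distance decrease'') is pointing at exactly this mechanism, but as written it is an intuition rather than a proof, and the concrete plan you commit to in its place does not work. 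To repair the proposal, discard the dismantlability detour and make the local-maximum-of-the-index argument precise.
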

	
	\begin{proof} Let $G$ be a chordal graph and fix a simplicial elimination ordering of the vertices: $v_n, v_{n-1}, \dots, v_1$.  We proceed by inducting on $\ell$. As the case where $\ell=0$ is trivial, we suppose $c_{\ell-1}(G) = c'_{\ell-1}(G)$ for some $\ell \geq 1$. Furthermore, we suppose once the robber has been seen by an $(\ell-1)$-visibility cop, that cop can capture the robber.

Initially, $c_\ell'(G)$-many $\ell$-visibility cops follow a strategy that allows an $\ell$-visibility cop $C$ to see the robber. Thus, at some point, $C$ occupies a vertex that is distance $\ell$ from the vertex occupied by the robber, $R$; suppose they are located on vertices $v_i$ and $v_x$, respectively. We assume it is the robber's turn to move, as otherwise that cop will move to a vertex that is distance $\ell-1$ from the robber (if $\ell =1$ then the cop occupies the same vertex as the robber) and by the induction hypothesis, $C$ can capture the robber.

Observe that the robber cannot always increase (or decrease) his index (according to the simplicial ordering). Thus, there is some step where the robber moves from $v_x$ to $v_y$ to $v_z$, where $y > x$ and $y> z$. Suppose $C$ occupies $v_i$ and $R$ occupies $v_x$ where $d(v_i,v_x)=\ell$. We will now show that $C$ will eventually capture the robber.

After $R$ moves to $v_y$, $C$ moves to any neighbour of $v_i$ that is distance $\ell-1$ from $v_x$; call such a neighbour $v_j$. If $\ell =1$ then $v_x=v_j$ and the robber is captured, otherwise, $R$ moves to $v_z$.  Since the vertices are indexed by a simplicial ordering, $v_y$'s lower-indexed neighbours form a clique.  Thus $v_x$ is adjacent to $v_z$ and $d(v_j,v_z)=\ell-1$. Since $C$ now occupies a vertex that is distance $\ell-1$ from the vertex occupied by $R$, by the inductive hypothesis $C$ can now capture the robber using an $(\ell-1)$-visibility strategy.
\end{proof}

Consider the case where $\ell$ is the diameter of the graph. Then the cop can see the entire graph and will capture the robber, thus the above result provides an alternate proof that chordal graphs are cop-win.

	  	
Theorem~\ref{thm:ellchord} raises the question: if $G$ is chordal, what is $c'_\ell(G)$? In the following section we fully answer this case for trees, a subfamily of chordal graphs. We provide a complete structural characterization for trees and show that for each $k \in \mathbb{N}$, there exists a chordal graph $G$ such that $c'_\ell(G) \geq k$.
	
Another question naturally arises, given that chordal graphs are cop-win: if $G$ is cop-win, does $c_\ell ' (G) = c_\ell (G)$?   In general, the answer is no.  In Figure~\ref{fig:Ex1}, we have a cop-win graph $G$ such that $c'_1(G) = 1$, but $c_1(G) = 2$.  Furthermore, when $\ell \ge 2$ we know that if $G$ is cop-win and $c'_\ell (G) \neq c_\ell (G)$, then $1 \le c_\ell (G) \le 2$ (from Theorem \ref{thm:see}).  It follows that for any $\ell \ge 2$ and cop-win graph $G$, $c_\ell (G) - c'_\ell (G) \le 1$.

\section{Trees}\label{sec:trees}

	Recall that for a tree $T$, the height of the tree $h(T)$ is given by $\min_{v\in V(T)}\{ec(v)\}$, where $ec(v)$ is the eccentricity of $v$.
	In \cite{Feiran}, the author gives an upper bound for the $1$-visibility cop number of a trees as a function of the height of the tree. The result is restated in Theorem \ref{upperboundtrees}, and the proof is provided for completeness.

\begin{theorem} \label{upperboundtrees}\cite{Feiran}
Given  a tree $T$,  $c_1(T) \le \left \lceil \frac{h(T)}{3} \right \rceil$.
\end{theorem}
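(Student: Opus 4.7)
The plan is to proceed by induction on $h = h(T)$. Since every tree is chordal, Theorem~\ref{thm:ellchord} gives $c_1(T) = c_1'(T)$, and moreover once any cop lies within distance one of the robber, that cop alone can capture it. Thus it suffices to exhibit a strategy for $\lceil h/3 \rceil$ cops that is guaranteed, against any robber strategy, to place some cop within distance one of the robber in a finite number of rounds.

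For the base case $h \le 3$, a single cop should suffice. Root $T$ at a vertex $r$ with $ec(r) = h$, so every vertex of $T$ lies within distance three of $r$. The cop starts at $r$ and performs a cyclic sweep of the subtrees hanging off $r$: descend one edge (to a child), then another (to a grandchild), then return to $r$, then repeat on the next subtree, vibrating on edges as needed. I would track the ``robber territory'' (the set of vertices consistent with the cop never having seen the robber) after each cop move, and argue by careful bookkeeping that in each full cycle of the sweep the territory must lose at least one vertex, so that after finitely many cycles it becomes empty, forcing the robber into the cop's sight.

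For the inductive step, assume the bound for all trees of height less than $h$, with $h \ge 4$. Root $T$ at a vertex $r$ realizing $ec(r) = h$. The plan is to use a single ``lead'' cop $C_1$ to confine the robber to a subtree rooted at a vertex of depth three. The crucial observation is a standard vibrating-cop blocking lemma: in a tree, a cop vibrating on an edge $uv$ blocks the robber from crossing $uv$, because any attempt to cross forces the robber to occupy $u$ or $v$ while the cop is at the adjacent endpoint, which means the cop sees (and then, by Theorem~\ref{thm:ellchord}, captures) the robber. Using this, $C_1$ descends from $r$ through depths $1$, $2$, and $3$ along a systematically chosen path, vibrating on successive edges to narrow the branch of $T$ in which the robber can lie, until the robber is confined to a single subtree $T_w$ where $w$ has depth three. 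Since $T_w$ has height at most $h-3$, the remaining $\lceil h/3 \rceil - 1 = \lceil (h-3)/3 \rceil$ cops can enter $T_w$ and capture the robber by the inductive hypothesis, while $C_1$ continues to guard the edge from $w$ to its parent.

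The main obstacle will be making the base-case sweep rigorous, since the robber's avoidance strategy can be quite flexible when the cop still has three layers of tree to cover: one must show that no robber strategy indefinitely postpones being seen, which is not immediate from simply computing the over-approximating territory $S_k$ and instead requires a finer argument about the adaptive cop strategy. Once the base case is handled, the inductive step is comparatively routine, as it just combines the base-case sweep on the top three layers with the vibrating-cop blocking lemma to justify passing from $T$ to the smaller tree $T_w$.
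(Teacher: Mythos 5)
Your overall architecture is the same as the paper's: a base case for height at most three handled by one sweeping cop, then an inductive step that peels off three levels using a vibrating guard cop plus a recursive strategy on the subtrees rooted at depth-three vertices; the appeal to Theorem~\ref{thm:ellchord} to reduce capture to seeing is legitimate (the paper uses the even simpler observation that on a tree a cop who has seen the robber can tail it onto a leaf). However, your inductive step has a genuine flaw in its order of operations. You have $C_1$ first ``confine the robber to a single subtree $T_w$'' rooted at a depth-three vertex, and only then send the remaining $\lceil (h-3)/3\rceil$ cops into $T_w$. With $1$-visibility, a lone cop descending one root-to-depth-three path learns nothing about which branch the robber occupies and cleans nothing outside a radius-one neighbourhood of that path: afterwards the robber territory still meets every depth-three subtree off the chosen path, so there is no single $T_w$ to recurse into. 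The territory only collapses to one such subtree after all the others have been cleaned, which is precisely the recursive work you are trying to defer; as written the step is circular. The correct structure, and the one the paper uses, interleaves guarding and cleaning: for each child $x_1$ of the root in turn, and for each depth-three descendant $x$ below $x_1$ in turn, the guard vibrates between $x_1$ and the parent of $x$ while the other $\lceil (h-3)/3\rceil$ cops clean $T_x$; since the guard is adjacent to $r$ (and sees the gateway of every already-cleaned $T_y$) at the end of every other round, the robber cannot pass through the top three levels unseen, so cleaned subtrees are never recontaminated.

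The base case is a second gap, which you do flag, but the invariant you propose (``each full cycle of the sweep removes at least one vertex from the territory'') is not the right one and is difficult to make true. The cleaner argument for height two rooted at $w$ is a trapping argument: the walk $w, w_1, w, w_2, \dots$ ends every round on $w$ or on a neighbour of $w$, so the robber can never end a round inside $N[w]$ without being seen on the cop's next move; hence the robber is pinned at a single distance-two leaf, must pass forever, and is seen when the cop reaches the corresponding $w_i$. Height three then follows by running this sweep inside each subtree rooted at a neighbour of the centre, noting the robber cannot cross the centre unseen. Replacing your territory-counting bookkeeping with this trapping argument, and restructuring the inductive step as clean-while-guarding rather than confine-then-clean, would bring the proof in line with a correct argument.
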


\begin{proof}
We note that on a tree, once a cop sees the robber, the robber will be captured in a finite number of rounds (the cop simply moves onto the robber's previous position which eventually forces the robber onto a leaf).   Therefore, in the 1-visibility game, it suffices to show that at some point, a cop occupies a vertex adjacent to the robber.  We begin by verifying that for any tree with height at most three, a single cop can capture the robber.

First, suppose $T$  is a tree with height at most two. Assume $T$ is rooted at its centre (or an endpoint of its centre), $w$.  If $h(T) \le 1$, then one cop initially positioned at $w$ will capture the robber in the next round.  Suppose $h(T) = 2$ and $N(w) = \{w_1, w_2, \ldots , w_m\}$ for some $m \ge 2$.  The cop then performs the  walk $w, w_1, w, w_2, w, \ldots, w, w_m$, alternating between a neighbour of $w$ and $w$ itself.  The robber is prevented from moving onto a vertex in $N[w]$, since the cop will either see him immediately or in the next round.  Therefore, to avoid capture, the robber must occupy some vertex $x$ such that $d(w,x) = 2$ and pass in every round.  Eventually the robber will be seen by the cop.

Now, suppose $T$ is a tree rooted at its centre, $r$,  such that $h(T) = 3$.  For each $w \in N(r)$, the cop, in turn, performs the walk described in the previous paragraph.  If the robber is in the subtree rooted at $w$, he will be seen by the cop.  Furthermore, if the robber moves onto $r$, he will either be seen by the cop immediately, or he will be seen by the cop in the next round when the cop moves onto some $w \in N(r)$.   Therefore, if the robber is in a subtree of $T$ rooted at some $w \in N(r)$ in the initial round of the game, he cannot move out of that subtree without being seen by the cop.  It follows that the robber will be seen by the cop after a finite number of rounds.

Now assume that for any rooted tree $T'$ of height at most $3k$ for some $k \ge 1$, $c_1(T') \le k$.
Consider a tree $T$ rooted at its centre $r$, such that $h(T) \le 3k+3$.  Let  $N(r) = \{x_1, x_2, \ldots , x_m\}$.  For each $x \in V(T)$, let $T_x$ be the subtree of $T$ rooted at $x$.  It follows that if $d(r,x) = 3$, then $h(T_x) \le 3k$ and $c_1(T_x) \le k$.

For each $i=1, \ldots , m$ we let $X_i = \{x \, | \, d(r,x) = 3 \text{ and }  d(x_i,x) = 2\}$.  It follows that the vertices in $X_i$ are descendants of both $r$ and $x_i$.   Without loss of generality, assume that for some $m' \ge 1$, $X_1, \ldots X_{m'}$ are non-empty, and either $m' = m$ or $X_j = \emptyset$ for $j = m'+1, \ldots , m$.

The cops' strategy is as follows:  a cop, $C_1$, initially occupies $x_1$, while a set of $k$ cops, $\mathcal C$,  occupy vertices in the subtree $T_x$ for some $x \in X_1$.  The cop $C_1$ vibrates between $x_1$ and the parent of $x$ in $T$.  Meanwhile, the set $\mathcal C$ of $k$ cops clean $T_x$.  Once $T_x$ is cleaned, assuming the robber has not been seen, we mark vertex $x$.  (All vertices in $X_1$ are initially unmarked.)  We then iteratively choose an unmarked vertex $y$ in $X_1$ and move the cops in $\mathcal C$ to $y$.  Only then does $C_1$ change its vibrating pattern and begin to vibrate between $x_1$ and the parent of $y$.  (Note that $x$ and $y$ may have the same parent.)  The cops in $\mathcal C$ then clean $T_y$.  This is repeated until all vertices in $X_1$ are marked or the robber is seen.  Since the robber will be seen if he moves onto any neighbour of $x_1$, once $T_y$ is cleaned for some $y \in X_1$, it cannot be recontaminated during this phase.  As a result, the subtree rooted at $x_1$ is cleaned.
Furthermore, if the robber ever moves onto the root $r$ of $T$, he will either be seen immediately by $C$ or will been seen the next round when $C_1$ moves onto $x_1$.


For each $i = 1, 2, \ldots m-1$, once the subtree $T_{x_i}$ is cleaned, $C_1$ moves on the path $x_irx_{i+1}$ and the  cops in $\mathcal C$ move onto the subtree $T_{x_{i+1}}$.  Together they then clean
$T_{x_{i+1}}$.  (If $i+1 \le m'$, they use the strategy in the previous paragraph.  Otherwise, the subtree $T_{x_{i+1}}$ has height at most one and is cleaned by $C_1$.) We note that since $C_1$ is adjacent to $r$ at some point in every round, the robber cannot move onto $r$ without being captured.  As as result once $T_{x_i}$ is cleaned, it cannot be recontaminated.  It follows that the robber will eventually be seen.
\end{proof}\\

	By noticing that in a tree of height $h\leq 2\ell+1$ the robber may never pass through the root unobserved we arrive at the following.
	
	\begin{observation}
		\label{obs3height}
		If  $T$ is a tree with height $h\leq 2\ell+1$, $\ell\geq 0$ with respect to root $r$, then $c_{\ell}(T)=1$.
	\end{observation}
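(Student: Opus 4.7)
The plan is to exhibit a single-cop strategy and invoke Theorem~\ref{thm:ellchord} (applicable since trees are chordal) to promote seeing the robber to capturing him. The cop begins at $r$; on round $0$ his visibility ball covers every vertex at depth $\leq \ell$, so either the robber is already seen (and captured via Theorem~\ref{thm:ellchord}) or the robber lies at depth between $\ell+1$ and $2\ell+1$ in some subtree $T_{u_i}$ rooted at a child $u_i$ of $r$. In the latter case I have the cop perform a depth-first traversal of $T$ restricted to vertices of depth at most $\ell+1$: he visits every such vertex in DFS order but never descends further.

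My first step is to formalise the hint that the robber cannot pass through $r$ unobserved. Whenever the cop sits at depth $\leq \ell$ he sees $r$; and since depth-$(\ell+1)$ vertices are pairwise non-adjacent in a tree, once the cop enters a depth-$(\ell+1)$ vertex on round $t$ the restricted DFS forces him back up to depth $\ell$ on round $t+1$. So the cop is at depth $\leq \ell$ on at least every other cop-move. Any round in which the robber sits at $r$ is therefore followed, within at most one more cop-move, by the cop arriving at depth $\leq \ell$, seeing $r$, and (by Theorem~\ref{thm:ellchord}) capturing the robber. Consequently the robber is trapped in the subtree $T_{u_i}$ he initially inhabits.

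My second step is to show the cop actually sees the robber inside $T_{u_i}$. At each depth-$(\ell+1)$ vertex $v$ visited during the DFS the cop's visibility ball covers every descendant of $v$ (distance $\leq \ell$ since $h \leq 2\ell+1$), and together with the initial coverage of depths $\leq \ell$ from $r$, every vertex of $T_{u_i}$ lies in the cop's visibility ball at some round. To upgrade this static covering into actually seeing the moving robber, I would track the robber's ``projection'' $\pi(R_t)$, defined as his ancestor at depth $\min(\mathrm{depth}(R_t),\ell+1)$, which lives in the finite subtree $T^{\ast}$ of depth-$\leq(\ell+1)$ vertices; the cop sees the robber whenever he sits on $\pi(R_t)$, and each robber move either fixes $\pi(R_t)$ or shifts it along an edge of $T^{\ast}$. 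Since $T^{\ast}$ is a tree and hence cop-win in the classical game, the cop's DFS is a valid one-cop strategy on $T^{\ast}$ and eventually lands him on $\pi(R_t)$. The hardest part will be this projection step---specifically, checking that the cop's $\ell$-visibility in $T$ supplies enough information on $T^{\ast}$ to run the classical tree cop-win argument---after which Theorem~\ref{thm:ellchord} completes the capture.
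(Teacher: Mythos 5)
Your overall plan (a DFS on the depth-$\le(\ell+1)$ truncation $T^\ast$, plus Theorem~\ref{thm:ellchord} to convert seeing into capture) is the right one, and your first step --- that the robber can never occupy $r$ at the end of a round without being seen by the cop's next move, because the restricted DFS puts the cop back at depth $\le\ell$ at least every other move --- is exactly the one-line justification the paper gives for this observation. The gap is in your second step. The sentence ``Since $T^{\ast}$ is a tree and hence cop-win in the classical game, the cop's DFS is a valid one-cop strategy on $T^{\ast}$ and eventually lands him on $\pi(R_t)$'' does not hold up: the classical cop-win strategy on a tree requires the cop to see the robber and move toward him, whereas your cop has no idea where $\pi(R_t)$ is for most of the game, so on $T^\ast$ he is effectively playing a low-visibility game. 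A blind DFS does \emph{not} catch an adversarially moving invisible target on a tree --- the paper's own Lemma~\ref{lem:copnumberk} with $k=2$, $\ell=0$ shows that a spider with three short legs already defeats one zero-visibility cop --- so ``cop-win in the classical game'' buys you nothing here. You flag this yourself as ``the hardest part,'' but it is precisely the part that needs an argument, and the classical pursuit machinery is the wrong tool for it.

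The missing idea is nested confinement rather than pursuit. Whenever the cop stands at depth $d'\le\ell+1$, every ancestor of his position at depth $a\ge 1$ is at distance $d'-a\le\ell$ and hence surveilled. Consequently, during the contiguous block of the DFS spent inside $T^\ast\cap T_v$ for a vertex $v$ at depth $\ge 1$, the vertex $v$ is seen after every single cop move, so the robber can never occupy $v$ during that block and is pinned inside one fixed child subtree of $v$ for its duration. Inducting downward from $v=u_i$ (your first step supplies the $v=r$ level), the robber is eventually pinned inside $T_w$ for some $w$ at depth $\ell+1$, all of which lies within distance $h-(\ell+1)\le\ell$ of $w$; the cop sees him upon visiting $w$, and Theorem~\ref{thm:ellchord} finishes. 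This is the natural generalization of the height-three induction in the proof of Theorem~\ref{upperboundtrees}; with it your strategy is correct, but as written the proof is incomplete.
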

	
	
	In \cite{Complexity}, the authors provide a full characterization for the zero-visibility cop number of trees. Here we extend this work for $\ell > 0$.



Let $T$ be a tree with a root $r$. Let $x,y$ be vertices of $T$ where $x$ is a successor of $y$ with respect to $r$. The \emph{subtree rooted at $x$} is the component of $T - xy$ that contains $x$.	We call a vertex at distance $q$ from $r$ a \emph{$q$-descendent} of $r$.

\begin{figure}
	\begin{center}
		\includegraphics[width = 0.6\linewidth]{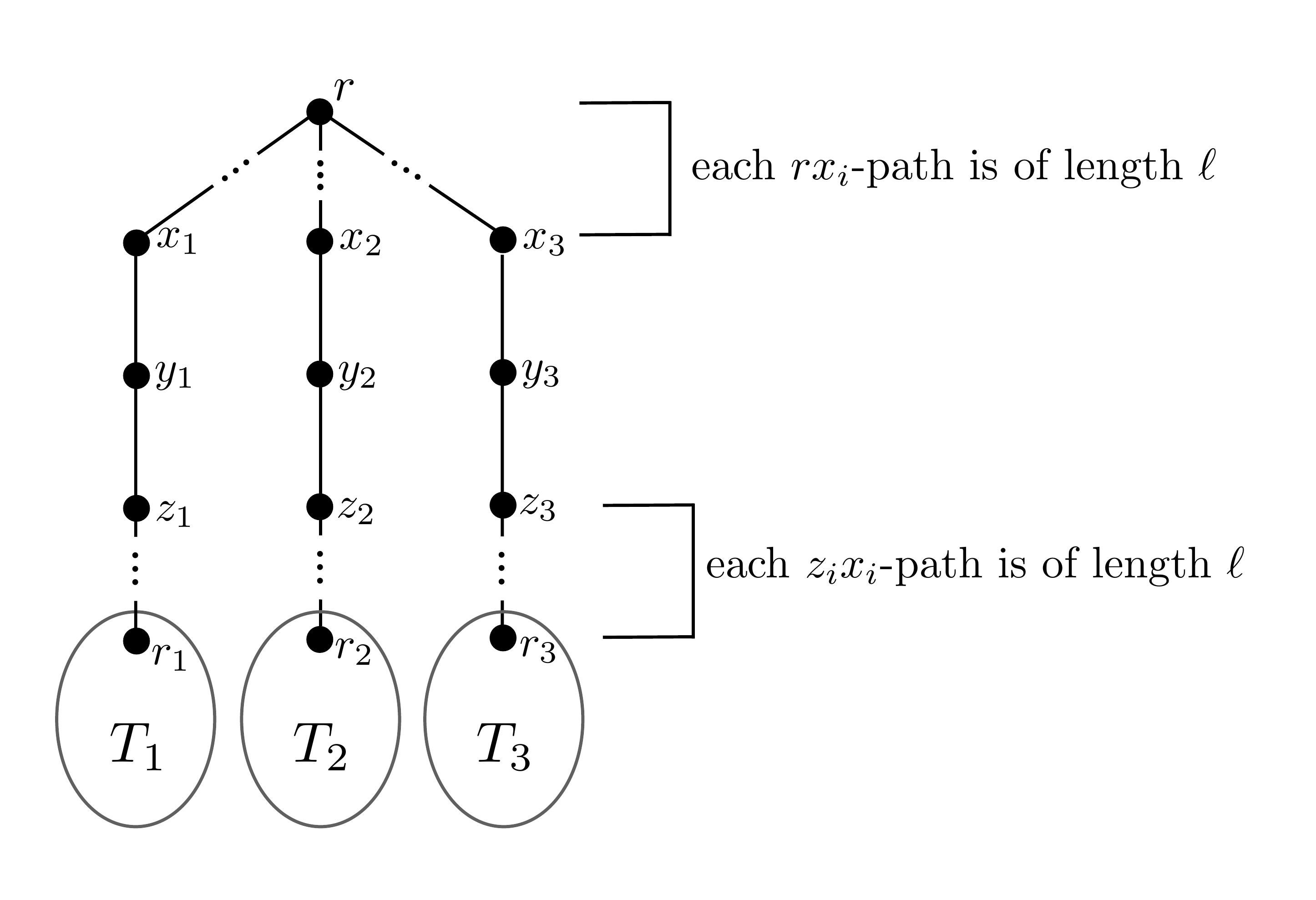}
	\end{center}
	\caption{An element of $\mathcal{T}_{k,\ell}$}
	\label{fig:kcoptrees}
\end{figure}

\begin{defn}\label{def}
	Let $\mathcal{T}_{k,\ell}$ be the family of trees defined in the following way:
	\begin{itemize}
		\item $\mathcal{T}_{1,\ell} = \{ K_1\}$
		
		\item $\mathcal{T}_{k,\ell}$ is the set of trees, $T$, that can be formed as follows: let $T_1$, $T_2$, $T_3 \in \mathcal{T}_{k-1,\ell}$. Let $r_1,r_2,r_3$ be vertices of $T_1, T_2, T_3$ respectively. Then $T$ is formed from the disjoint union of $T_1,T_2,T_3$, together with paths of length $2{\ell}+2$ from each of $r_1, r_2, r_3$, to a common endpoint, $q$.
	\end{itemize}
\end{defn}

Note that $r_1,r_2,r_3$ are any vertices of $T_1,T_2,  T_3$, respectively. They need not be the vertices $q$ identified in the construction of elements of $T_{k-1,\ell}$.



\begin{lemma}
	\label{lem:copnumberk}
	If $T\in \mathcal{T}_{k,\ell}$ then $c_{\ell}(T)=k$.

\end{lemma}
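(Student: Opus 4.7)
My plan is to prove the lemma by induction on $k$. The base case $k=1$ is immediate since $\mathcal{T}_{1,\ell} = \{K_1\}$ and $c_\ell(K_1) = 1$. For the inductive step, let $T \in \mathcal{T}_{k,\ell}$ be built from $T_1, T_2, T_3 \in \mathcal{T}_{k-1,\ell}$ via paths $P_1, P_2, P_3$ of length $2\ell+2$ joining chosen roots $r_1, r_2, r_3$ to the common central vertex $q$. Write $u_j^{(i)}$ for the vertex of $P_i$ at distance $j$ from $q$, and let $B_i$ denote the connected component of $T \setminus N_\ell[q]$ containing $T_i$; that is, $B_i$ is $T_i$ together with the vertices $u_{\ell+1}^{(i)}, \ldots, u_{2\ell+1}^{(i)}$.

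For the upper bound $c_\ell(T) \le k$, I will station one cop $C^\star$ at $q$ throughout the game. Because $C^\star$ always sees $N_\ell[q]$ and every tree is chordal, Theorem~\ref{thm:ellchord} says that $C^\star$ alone can capture the robber as soon as the robber enters $N_\ell[q]$; hence the robber is confined to a single $B_i$ and cannot cross $q$. The remaining $k-1$ cops then clean $B_1, B_2, B_3$ in turn: march them together from $q$ along $P_i$ toward $r_i$, acting as a barrier that sweeps any robber on $P_i \cap B_i$ into $T_i$; then invoke the inductive $(k-1)$-cop cleaning strategy on $T_i \in \mathcal{T}_{k-1,\ell}$; then march back to $q$ and iterate. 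Each cleaned $B_i$ stays clean because $C^\star$ continues to guard $q$, so after all three branches are processed the robber has been captured.

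For the lower bound $c_\ell(T) \ge k$, I will suppose for contradiction that $k-1$ cops admit a cleaning strategy and construct a surviving robber. Corollary~\ref{cor:isometrictree} combined with the inductive hypothesis already gives $c_\ell(T) \ge c_\ell(T_i) = k-1$; the extra cop is forced by the need to watch $q$ while any $T_i$ is being cleaned. Specifically, cleaning $T_i$ requires the full workforce of $k-1$ cops to be committed within $B_i$ at some moment, at which point every cop lies at distance $> \ell$ from $q$, leaving $N_\ell[q]$ entirely unobserved. The path length $2\ell+2$ is precisely what permits a robber hiding deep in some $B_j$ to shuttle unseen across $q$ into a different $B_{j'}$, timed against the cops' switch between branches and never coming within distance $\ell$ of any cop. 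Iterating this rotation among the three branches yields a robber that survives forever, contradicting the assumption.

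The main obstacle will be making the lower-bound argument rigorous: I must exhibit an explicit rotation strategy and verify at every round that the robber can move to a vertex at distance greater than $\ell$ from every cop, no matter how the cops (adaptively) react to what they see. The crux is the tight role played by the path length $2\ell+2$ — any shorter and a moving cop can catch the robber mid-transit across $q$, while the same length makes the sweep-into-$T_i$ step of the upper bound go through. I expect the formal verification to proceed by tracking cop and robber positions along $P_i \cup P_{j'}$ throughout each branch switch and showing that whenever the cops' workforce relocates from $B_i$ to $B_{i'}$, they need at least $2(2\ell+2)$ rounds, which is exactly the budget the robber needs to reposition from one $B_j$ into another while maintaining the required clearance.
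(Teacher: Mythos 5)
Your overall architecture matches the paper's (induction on $k$; a guard cop plus a recursive $(k-1)$-cop clean of each branch for the upper bound; ``all $k-1$ cops must at some point be committed to each branch'' for the lower bound), but both halves have genuine gaps. The more serious one is in the upper bound, which you present as complete. Stationing the guard $C^\star$ at $q$ only surveils $N_\ell[q]$, so the path vertices $u_{\ell+1}^{(i)},\dots,u_{2\ell+1}^{(i)}$ form a pocket that is seen by nobody once the $k-1$ working cops are deep inside $T_i$. The recursive cleaning strategy for $T_i$ is only guaranteed to work for $T_i$ as a standalone tree; embedded in $T$, the robber can exit through $r_i$ into that pocket while the working cops are far from $r_i$, loiter there unseen (he is at distance $\ell+1$ from $q$), and later re-enter an already-cleaned region of $T_i$, after which the cops march off to the next branch believing $T_i$ is clean. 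Your initial sweep of the pocket does not prevent this re-entry, and nothing in the inductive hypothesis lets you assume the $(k-1)$-cop strategy keeps a cop near $r_i$ (indeed $r_i$ is an arbitrary vertex of $T_i$). The paper avoids this by placing the guard not at $q$ but \emph{vibrating} between the path vertices at distance $\ell$ and $\ell+1$ from $q$ on the branch being cleaned: over two consecutive steps that single cop sees $q$ \emph{and} every path vertex up to the neighbour of $r_i$, so the robber can neither cross $q$ nor leave or re-enter $T_i$ through $r_i$ unobserved. This is the one idea your proof is missing, and it is exactly what the length $2\ell+2$ is calibrated for.

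For the lower bound, the claim that all $k-1$ cops must simultaneously occupy $B_i$ does not follow from $c_\ell(T_i)=k-1$ alone; it needs an argument, which the paper supplies by projecting any strategy that never puts more than $k-2$ cops in the branch onto a $(k-2)$-cop cleaning strategy for $T_i$ (cops outside the branch are mapped to a fixed boundary vertex from which no vertex of $T_i$ is visible, so they contribute nothing). You should prove this rather than assert it. Also, rather than exhibiting a robber who ``rotates forever'' --- whose timing analysis against an adaptive adversary is hard to pin down, as you acknowledge --- the paper closes the argument more cleanly: take the first round $t_1$ after which some leaf $v$ is permanently clean, observe that the rounds at which all cops sit in the second and third branches must come after $t_1$, and let the robber (maintaining distance $\ell+2$ from every cop) reach $q$ at such a moment and recontaminate $v$, contradicting the choice of $t_1$. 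I would recommend adopting both the vibrating guard and this first-cleaned-leaf contradiction.
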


\begin{proof}
Let $T \in \mathcal{T}_{k,\ell}$ with vertices labelled as in Figure \ref{fig:kcoptrees}, with paths of length $\ell$ from $r$ to $x_i$ and from $z_i$ to $r_i$. Observe that the subtrees $T_1$, $T_2$, and $T_3$ are in $\mathcal{T}_{k-1,\ell}$. Inductively, $c_\ell(T_i) = k-1$ for $1 \le i \le 3$.  We first assume that $k-1$ cops suffice to clean $T$.

Let $T_i^\prime$ be the subtree rooted at $z_i$. We now show that for all $i$ there exists a round during which all $k-1$ cops are located in $T'_i$. Assume, for some cleaning strategy $\cal{C}$ for $T$ using $k-1$ cops, that in every round, there are at most $k-2$ cops in $T'_i$. Let $w_x(t)$ be the vertex occupied by cop $C_x$ at the end of round $t$ in $\cal C$ for $1 \leq x \le k-1$. Define $w'_x(t)$ as follows.

$$w'_x(t)  =  \left\{\begin{array}{cl}
w_x(t) & \mbox{ if $w_x(t) \in V(T_i)$,}\\
r_i & \mbox{ if } w_x(t) \in V(T'_i)\backslash V(T_i)\\
y_i & \mbox{ otherwise.}
\end{array}\right.$$

In this strategy, $k-2$ cops suffice to clean $T_i$, a contradiction. For each $i$ there exists a round so that all $k-1$ cops are located in $T'_i$.

Let $t_1$ be the earliest round for which exactly one leaf is cleaned at the end of round $t_1$ and this leaf remains clean during each subsequent round. Let $v$ be this leaf and without loss of generality, let it be in $T_1$.

By the above argument, there exists a round during which all $k-1$ cops are in $T^\prime_2$ and there is a different round during which all $k-1$ cops are in $T^\prime_3$. We note that these times must occur after $t_1$, otherwise by the argument above, we have a contradiction, as $k-2$ cops suffice to clean $T_2$ or $T_3$.  Of $T_2'$ and $T_3'$, assume without loss of generality that $T^\prime_3$ is the first to contain all $k-1$ cops and this occurs at the end of round $t_{3}>t_1$.

Let $t_2$ be the round after $t_1$ and before $t_3$ that all cops are on the path from $r$ to $r_3$ or are in $T_3$. Assume, without loss of generality, that from round $t_2$ to round $t_3$ that the robber maintains a distance of $\ell+2$ from the nearest cop. Then, at the end of round $t_3$, all cops are in $T_3'$ and the robber may be at $r$.  At the end of each subsequent round, the robber maintains (at least) distance $\ell+2$ from the nearest cop and will eventually recontaminate $v \in T_1$, which yields a contradiction. Therefore $k-1$ cops do not suffice to clean $T$.

We now construct a $k$-cop cleaning strategy for $T_{k}$. 	
Place all of the cops on $x_1$. One cop, $C$, will vibrate between $x_1$ and $y_1$. The remaining $k-1$ cops traverse down the branch and use a $k-1$-cop strategy to clean $T_1$. Note that $r_1$ is not necessarily the root of $T_1$, but as the $(k)$th cop prevents the robber from leaving $T_1$ unobserved, the remaining $k-1$ cops may re-position themselves as needed. Since $C$ is on $y_1$ in every second time-step, this ensures that once $T_1$ is cleaned using the $k-1$-cop strategy it remains clean.

The cops then all move to $x_2$ and repeat this process with $T_2$. Since $C$ is on $x_2$ in every second time step while $T_2$ is being cleaned, the robber cannot pass through $r$ unobserved. Similarly when repeating this process for $T_3$. Hence we have cleaned $T$ using $k$ cops.
\end{proof}

\begin{corollary}\label{cor:cleanRoot}
	For all $T \in \mathcal{T}_{k,\ell}$ there exists a cleaning strategy for $T$ using $k$ cops in which the root of $T$ is never unseen for two consecutive timesteps.
\end{corollary}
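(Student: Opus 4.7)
The plan is to show that the $k$-cop cleaning strategy constructed in the proof of Lemma \ref{lem:copnumberk}, once its transitions between phases are made explicit, already satisfies the stronger visibility invariant. I proceed by induction on $k$. The base case $k=1$ is immediate: $T=K_1$ and the single cop sits on the root in every round.

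For the inductive step, let $T\in\mathcal{T}_{k,\ell}$ have root $r$ and be built from $T_1,T_2,T_3\in\mathcal{T}_{k-1,\ell}$ as in Definition \ref{def}, with $x_i$ and $y_i$ the vertices on the path from $r$ toward $r_i$ at distances $\ell$ and $\ell+1$ from $r$, respectively. In phase $i$ of the strategy from Lemma \ref{lem:copnumberk}, one cop $C$ vibrates between $x_i$ and $y_i$ while the remaining $k-1$ cops clean $T_i$ by any $(k-1)$-cop cleaning strategy (whose existence is guaranteed by Lemma \ref{lem:copnumberk}). Since $d(x_i,r)=\ell$, the cop $C$ sees $r$ on precisely those rounds when it sits at $x_i$, and the vibration pattern places $C$ at $x_i$ in every other round. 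Hence within any phase, $r$ is never unseen for two consecutive rounds.

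It remains to handle the initial placement and the two transitions between phases. Since all cops begin at $x_1$, cop $C$ is at distance $\ell$ from $r$ in round $0$, so $r$ is seen. Between phase $i$ and phase $i+1$, I will send $C$ along the path $x_i\to\cdots\to r\to\cdots\to x_{i+1}$ of length $2\ell$; throughout this walk $C$ remains within distance $\ell$ of $r$, so $r$ is seen in every round of the transition. The other $k-1$ cops relocate from $T_i$ to $T_{i+1}$ at their own pace, and if they have not arrived by the time $C$ reaches $x_{i+1}$, $C$ simply begins the new vibration between $x_{i+1}$ and $y_{i+1}$, preserving the every-other-round visibility invariant until cleaning of $T_{i+1}$ begins in earnest.

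The main subtlety will be confirming that making these transitions explicit does not compromise the cleaning property of the original strategy. This follows because the continuous visibility of $r$ by $C$ during every transition round, together with the alternating visibility during every phase, is exactly what the argument of Lemma \ref{lem:copnumberk} uses to prevent the robber from crossing $r$ between branches unobserved; combined with the $(k-1)$-cop cleaning of each $T_i$, this yields a valid $k$-cop cleaning strategy for $T$ in which the root is never unseen for two consecutive timesteps.
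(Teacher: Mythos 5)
Your proposal is correct and follows the same route as the paper: the corollary is exactly an observation about the strategy constructed in Lemma \ref{lem:copnumberk}, where the vibrating cop occupies $x_i$ (at distance $\ell$ from the root) every other round and stays within distance $\ell$ of the root while walking from $x_i$ to $x_{i+1}$ between phases. Your explicit treatment of the initial placement and the phase transitions just spells out what the paper leaves implicit.
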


\begin{lemma} \label{lem:orange}
	Let $T$ be a tree with root $r$ such that each of the subtrees rooted at an $\left(\left\lceil\frac{i}{2}\right\rceil\right)$-descendant have $\ell$-visibility cop number at most $k-2$, for some $i\leq 2\ell+2$. There exists a cleaning strategy for $T$ using $k-1$ cops such that $r$ is occupied every second time-step.
\end{lemma}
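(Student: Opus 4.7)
My plan is to dedicate one cop $C_1$ to an oscillation that keeps $r$ occupied every second round, while the remaining $k-2$ cops execute the hypothesized $(k-2)$-cop cleaning strategies on each subtree $T_v$ rooted at an $\lceil i/2 \rceil$-descendant $v$, sweeping through the branches of $r$ one at a time.

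First I would verify that $C_1$'s schedule effectively seals off $r$. Since $\lceil i/2 \rceil \leq \ell + 1$, every ``in-between'' vertex (one at distance less than $\lceil i/2 \rceil$ from $r$) lies within distance $\ell$ of $r$; so whenever $C_1$ occupies $r$, the entire in-between region is visible. In particular, if the robber ever ends a round at $r$, it is captured on the next cop move, so the robber cannot cross $r$ and is confined to one branch of $r$ for the whole game. While working on branch $B$ rooted at a neighbor $u^*$ of $r$, $C_1$ vibrates between $r$ and $u^*$: in either position, every in-between vertex of $B$ is within distance $\ell$ of $C_1$ (specifically, in-between vertices of $B$ have distance at most $\lceil i/2 \rceil - 2 \leq \ell - 1$ from $u^*$ and at most $\ell$ from $r$). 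The remaining $k-2$ cops then enter $B$ and successively run the hypothesized $(k-2)$-cop cleaning strategy on each $T_v$ for $v$ a $\lceil i/2 \rceil$-descendant of $r$ lying in $B$.

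Because the robber (confined to $B$) cannot occupy any in-between vertex of $B$ without being seen, and because seeing the robber in a tree implies capture (Theorem~\ref{thm:ellchord}, since trees are chordal), the robber cannot migrate between distinct $T_v$'s within $B$; hence the robber is trapped in a single $T_v$ and is caught by the $(k-2)$-cop cleaning strategy for that subtree. After $B$ is cleaned, the cops move to the next branch, with $C_1$ re-targeting its vibration partner on an even round at which $C_1$ sits at $r$; previously cleaned branches cannot be recontaminated because $r$ is still occupied every second round.

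The main obstacle is the careful bookkeeping at transitions: when the $k-2$ cleaner cops move between subtrees within the same branch (they must traverse in-between vertices that $C_1$ is guarding) and when they switch branches (requiring $C_1$ to change its vibration partner), one must verify that no cleaned region is recontaminated and that the $(k-2)$-cop cleaning strategy for each $T_v$ can be initiated from the cops' current positions. Both transitions are resolved by the visibility computation above, combined with the observation that confinement of the robber to a single $T_v$ means any crossing of the backbone would place the robber in $C_1$'s line of sight and thus lead to capture.
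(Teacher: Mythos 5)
Your proposal is correct and follows essentially the same approach as the paper: one cop vibrates between $r$ and a neighbour $x$ of the current branch (so $r$ is occupied every second step and, since every vertex strictly closer to $r$ than the $\left\lceil\frac{i}{2}\right\rceil$-descendants is within distance $\ell$ of both $r$ and $x$, the robber can neither cross $r$ nor migrate between subtrees unseen), while the remaining $k-2$ cops clean the subtrees rooted at the $\left\lceil\frac{i}{2}\right\rceil$-descendants in that branch one at a time, repeating over all neighbours of $r$. Your write-up simply makes explicit the distance bookkeeping and the recontamination argument that the paper leaves implicit.
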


\begin{proof}
	Let $x$ be a neighbour of $r$. Let $S_x$ be the set of $\left(\left\lceil\frac{i}{2}\right\rceil\right)$-descendant of $r$ that is also a descendants of $x$. If a single cop $C$ vibrates between $x$ and $r$, and if $S_x \neq \emptyset$, then the remaining $k-2$ cops clean the subtrees rooted at elements of $S_x$. If $S_x = \emptyset$, then moving $C$ to $x$ suffices to clean the subtree rooted at $x$.
	Repeating this process for all neighbours of $r$ gives the required cleaning strategy.
\end{proof}

\begin{lemma} \label{lem:white}
	Let $T$ be a tree with root $r$ such that each of the subtrees rooted at an ${i}$-descendant of $r$ have $\ell$-visibility cop number at most $k-2$ for some $i< 2\ell+2$. There exists a cleaning strategy for $T$ using $k-1$ cops such that $r$ is seen every second time-step.
\end{lemma}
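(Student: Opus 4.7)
The plan is to adapt the strategy of Lemma~\ref{lem:orange}. In that lemma, the guard cop $C$ vibrates between $r$ and a neighbour of $r$ (keeping $r$ occupied at every second time-step); here $C$ will be allowed to drift further out, vibrating between a vertex at distance $\ell$ from $r$ and one at distance $\ell+1$. This loosens ``occupied'' to ``seen'', which matches the weakened conclusion in our statement; in return, it enables the remaining $k-2$ cops to reach $i$-descendants of $r$ as deep as distance $2\ell+1$, which is the bound imposed by the hypothesis $i < 2\ell+2$.

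I will process the neighbours of $r$ one at a time. Fix a neighbour $x$ of $r$. If $T_x$ contains no $i$-descendant of $r$, then $T_x$ has depth at most $2\ell$ and a single cop can sweep it using the strategy of Theorem~\ref{upperboundtrees} while staying within distance $\ell$ of $r$, so that $r$ is seen throughout. Otherwise, choose an $i$-descendant $v \in T_x$ of $r$ and let $w_0 = r, w_1, \ldots, w_i = v$ denote the unique $r$--$v$ path. Set $u = w_\ell$ and $u' = w_{\ell+1}$ (the case $i < \ell$ is handled analogously by sliding these choices toward $v$). The cop $C$ vibrates between $u$ and $u'$; since $d(u,r) = \ell$, the condition ``$r$ is seen every second time-step'' is automatic.

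The same parity-and-distance argument that appears in the proof of Lemma~\ref{lem:copnumberk} shows that every vertex $w_j$ with $1 \le j \le 2\ell$ is within distance $\ell$ of both $u$ and $u'$, so a robber walking along the neck $w_0, w_1, \ldots, w_{2\ell+1}$ will necessarily be spotted at some interior vertex of the neck. While $C$ vibrates, the remaining $k-2$ cops apply the $(k-2)$-cop strategies guaranteed by the hypothesis to clean each subtree rooted at an $i$-descendant of $r$ in $T_x$; once such a subtree is cleaned, $C$'s vibration prevents its recontamination through the neck. After every $i$-descendant subtree in $T_x$ has been cleaned, the residual uncleaned vertices of $T_x$ lie within distance $i-1 \le 2\ell$ of $r$, and they are swept by $C$ (with brief assistance from one of the $k-2$ cops to preserve the ``$r$ seen'' guarantee) via the tree-cleaning construction of Theorem~\ref{upperboundtrees}. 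The cops then transit to the next neighbour $x'$ by routing $C$ through $r$ itself, which trivially preserves ``$r$ seen'' during the transit.

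The main obstacle I expect is the bookkeeping of transitions: ensuring the ``$r$ seen every second time-step'' guarantee is never violated during the switches between cleaning successive $i$-descendant subtrees inside a single $T_x$, and also during the move from $T_x$ to $T_{x'}$, while simultaneously ruling out recontamination of any subtree cleaned earlier. These verifications parallel the corresponding bookkeeping in Lemma~\ref{lem:orange} and Lemma~\ref{lem:copnumberk}; the hypothesis $i < 2\ell+2$ is precisely what places every residual uncleaned vertex of $T_x$ within the combined observation range of $C$'s vibration pair $\{u, u'\}$.
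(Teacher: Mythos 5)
Your guard placement is where the argument breaks. You station $C$ vibrating between $u=w_\ell$ and $u'=w_{\ell+1}$ on the path to one chosen $i$-descendant $v$, and you claim that the hypothesis $i<2\ell+2$ ``places every residual uncleaned vertex of $T_x$ within the combined observation range of $\{u,u'\}$.'' That conflates depth with distance to $u$: a vertex $z$ of $T_x$ at depth $d_z$ whose path to $r$ leaves the $r$--$v$ path at depth $b\le\ell$ satisfies $d(z,u)=d_z+\ell-2b$, so $u$ sees $z$ only when $d_z\le 2b$ (and $u'$ sees even less). For a branch splitting off at $x$ itself ($b=1$), $C$ sees nothing of that branch below depth $2$, while the hypothesis permits necks and $i$-descendants down to depth $2\ell+1$ there. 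Hence both of your key claims fail off the $v$-path: the neck leading to another $i$-descendant $v'$ is unsurveilled, so after the $k-2$ cops clean the subtree rooted at $v'$ and move on, a robber lurking at depths $3$ through $2\ell$ of that branch can re-enter it (timing his single pass through the shallow seen zone against $C$'s vibration parity) and recontaminate it; and the final sweep of the residual neck cannot be performed from $\{u,u'\}$. A concrete failure occurs already for $\ell=3$, $i=7$, $k=3$, with two depth-$7$ descendants on branches of $T_x$ that split at $x$ and a depth-$6$ dead-end side branch as a hiding spot. Relocating $C$'s vibration pair onto the path of whichever $i$-descendant is currently being cleaned does not repair this, since then it is the already-cleaned branches that are unsurveilled.

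The paper avoids this with a two-level decomposition rather than a deep guard: it splits $T_x$ at the $\lfloor i/2\rfloor$-descendants of $r$ and applies Lemma~\ref{lem:orange} to each such subtree $T_y$ in turn. Since $i<2\ell+2$ gives $\lfloor i/2\rfloor\le\ell$ and $\lceil i/2\rceil\le\ell+1$, the guard supplied by Lemma~\ref{lem:orange} occupies the local root $y$ every second time-step, which simultaneously keeps $r$ seen (as $d(r,y)\le\ell$) and places the \emph{entire} local neck of $T_y$ (relative depth at most $\lceil i/2\rceil-1\le\ell$) within distance $\ell$ of the guard. That halving is the step your construction is missing: the guard must sit at the root of a subtree shallow enough that its whole neck lies in its sight, and $i<2\ell+2$ guarantees this only after splitting the depth in two.
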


\begin{proof}
	Let $S_r$ be the set of $1$-descendants of $r$. For $x \in S_r$ let $S_x$ be the $\left(\left\lfloor\frac{i}{2}\right\rfloor\right)$-descendants of $r$ that are descendants of $x$.
	By Lemma \ref{lem:orange}, a subtree rooted at an element of $S_x$ can be cleaned using $k-1$ cops such that $r$ is seen in every second time-step.
	As such we can clean the subtree rooted at $x$ using $k-1$ cops by cleaning each of the subtrees rooted at an element of $S_x$ in left to right order such that $r$ is seen in every second time-step.
	Observe that any vertex that is a descendant of $x$ that is not contained in a subtree rooted at an element of $S_x$ is cleaned during this process.
	Repeating this process for each of the elements of $S_r$ gives a cleaning strategy for $T$ using $k-1$ cops such that $r$ is seen in every second time-step.
	Note that since $r$ is seen in every second time-step, once a subtree rooted at an element of $S_r$ is clean it cannot be recontaminated. 	
\end{proof}

\begin{theorem}\label{thm:treek}
	If $T$ is a tree, then $c_{\ell}(T)=k$, where $k$ the greatest integer such that $T$ contains a subgraph from $\mathcal{T}_{k,\ell}$.
\end{theorem}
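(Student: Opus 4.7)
The theorem is a characterization, so the plan is to prove the two inequalities separately. For the lower bound $c_\ell(T) \geq k$, I would observe that any subgraph of $T$ isomorphic to some $T' \in \mathcal{T}_{k,\ell}$ must be a subtree of $T$, since $T'$ is connected; any subtree of a tree is induced and hence isometric in $T$. Combining Corollary~\ref{cor:isometrictree} with Lemma~\ref{lem:copnumberk} immediately yields $c_\ell(T) \geq c_\ell(T') = k$.

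For the upper bound, I would proceed by induction on $k$, showing that whenever a tree $T$ contains no subgraph from $\mathcal{T}_{k+1,\ell}$, one has $c_\ell(T) \leq k$; the base case is trivial. The structural consequence of the hypothesis that drives the induction is: at every vertex $q$ of $T$, at most two branches at $q$ reached via a path of length $\geq 2\ell+2$ can contain a $\mathcal{T}_{k,\ell}$-subgraph; otherwise $q$ would be the apex of a $\mathcal{T}_{k+1,\ell}$-subgraph of $T$. Using this, I would choose a root $r$ for the strategy either as an apex of some maximal $\mathcal{T}_{k,\ell}$-subgraph (when one exists) or as any vertex (when $T$ has no $\mathcal{T}_{k,\ell}$-subgraph, in which case the inductive hypothesis gives $c_\ell(T) \leq k-1$ directly). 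With $r$ chosen, the subtrees at $r$'s descendants split into \emph{short} ones (whose descendant subtrees at the relevant depth have no $\mathcal{T}_{k-1,\ell}$-subgraph) and at most two \emph{long} ones (each containing a $\mathcal{T}_{k,\ell}$-subgraph whose apex is a $(2\ell+2)$-descendant of $r$). The $k$-cop strategy first cleans the short subtrees using Lemmas~\ref{lem:orange} and~\ref{lem:white}, while one cop guards $r$ by vibrating between $r$ and a neighbor so that $r$ is seen every second time-step; then it cleans each long subtree in turn, using all $k$ cops and the inductive hypothesis on a strictly smaller tree.

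The hardest part will be managing recontamination during the transitions between branch-cleanings, especially when two vertex-disjoint $\mathcal{T}_{k,\ell}$-subgraphs are present (which is permitted by the hypothesis). The guard near $r$ ensures $r$ is seen every other step, so the robber cannot pass through $r$ during the short-subtree phase; but when the cops commit to cleaning a long subtree, the guard will need to move, and I would need to argue that a ``sliding guard'' along the path connecting two disjoint $\mathcal{T}_{k,\ell}$-subgraphs still prevents recontamination of the cleaned branches. The inductive hypothesis, applied to the long subtree itself (which by the structural observation has at most two ``further long'' sub-subtrees), closes the recursion. Verifying this carefully---in particular, that the short-subtree phase completes before any long-subtree cleaning begins, and that each long-subtree cleaning preserves the cleanliness of other branches through the root-guarding invariants of Corollary~\ref{cor:cleanRoot}---is the technical crux of the proof.
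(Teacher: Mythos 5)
Your lower bound is correct and is exactly the paper's: a subtree of a tree is induced, hence isometric, so Corollary~\ref{cor:isometrictree} and Lemma~\ref{lem:copnumberk} give $c_\ell(T)\ge k$. The gap is in the upper bound, and it sits precisely where you yourself flag ``the technical crux.'' Your induction is on $k$, but the ``long'' subtrees at your root $r$ contain a $\mathcal{T}_{k,\ell}$-subgraph, so the inductive hypothesis (trees with no $\mathcal{T}_{k,\ell}$-subgraph need at most $k-1$ cops) says nothing about them; you instead appeal to ``the inductive hypothesis on a strictly smaller tree,'' which is a second, unstated induction. Even granting an induction on $(k,|V(T)|)$, the recursive strategy for a long subtree $B$ is a strategy for $B$ \emph{in isolation}: it must at some point put all $k$ cops deep inside $B$ (this is forced, by the projection argument in Lemma~\ref{lem:copnumberk}), and it carries no invariant that any cop is near $B$'s attachment vertex often enough to stop the robber from slipping out through $r$ and recontaminating the branches you have already cleaned. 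When two vertex-disjoint $\mathcal{T}_{k,\ell}$-subgraphs are present this is exactly the hard case, and the ``sliding guard'' is asserted, not constructed. A secondary issue: your structural claim that three far branches each containing a $\mathcal{T}_{k,\ell}$-subgraph would force a $\mathcal{T}_{k+1,\ell}$-subgraph with apex $q$ needs care, since Definition~\ref{def} attaches the three pieces by paths of length \emph{exactly} $2\ell+2$ ending at a vertex of the smaller element.

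The paper closes this gap with a different decomposition. It takes a minimal counterexample in $k$ and splits on whether $T$ contains two \emph{vertex-disjoint} subtrees with $\ell$-visibility cop number $k-1$. If not, a single minimal such subtree $H$ has an articulation vertex $x_1$ at which $H$ falls into pieces each needing only $k-2$ cops (by minimality), so one cop parked on $x_1$ plus $k-2$ workers clean everything. If so, it takes two such minimal subtrees $H,H'$ at \emph{maximum} distance, roots $T$ at the centre of the path $P$ joining them, cleans $H$ first and $H'$ last (again via their articulation vertices and minimality), and sweeps $P$ in between, cleaning each hanging subtree $X_i$ with $k-1$ cops while $x_i$ is seen every second step (Lemmas~\ref{lem:orange} and~\ref{lem:white}); the absence of a $\mathcal{T}_{k,\ell}$-subgraph is what guarantees no third ``deep'' subtree hangs off $P$ and breaks the sweep. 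The point is that all $k-1$ cops are only ever committed deep inside a subtree at the very beginning and very end, when nothing cleaned elsewhere is at risk --- the invariant your proposal lacks. To repair your argument you would essentially have to reorganize it along these lines.
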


\begin{proof} Let $T$ be a tree and $k$ the greatest integer such that $T$ contains a subgraph from $\mathcal{T}_{k,\ell}$. Clearly $T$ needs at least $k$ cops by Lemma~\ref{lem:copnumberk} and Corollary~\ref{cor:isometrictree}. It suffices to show that if $c_{\ell}(T)=k$, then $T$ must contain some $T_k\in \mathcal{T}_{k,\ell}$.

Now, let $k$ be the least integer such that there exists a tree $T$ such that $c_{\ell}(T)\geq k$ and does not contain some element of $\mathcal{T}_{k,\ell}$. Since $k$ is the least such integer, $T$ contains an element of $\mathcal{T}_{k-1,\ell}$. We proceed in two cases.\vspace{0.1in}


\noindent\emph{Case I: $T$ contains no pair of vertex disjoint  elements of $\mathcal{T}_{k-1,\ell}$}.\vspace{0.1in}

Let $H$ be any minimal subtree of $T$ such that $c_\ell(H) = k-1$ and $T \setminus H$ is connected.
By minimality of $k$, $H$ has as a subgraph an element of $\mathcal{T}_{k-1,\ell}$. By construction of $H$, there is a single vertex $x_1$ of  such that $x_1$ has a single neighbour not in $H$.
Let $x_2$ be this neighbour.
Since $H$ is minimum, each of the subtrees rooted at a $1$-descendant of $x_1$ that are contained in $H$ require at most $k-2$ cops.
By assumption, $T \setminus H$ contains no element of $\mathcal{T}_{k-1,\ell}$.
Therefore, by the minimality of $k$, $c_\ell(T \setminus H) < k-1$. Thus $T$ can be cleaned using $k-1$ cops by leaving a single cop on $x_1$ and cleaning all of the subtrees of $T \setminus\{x_1\}$ one at a time, each using $k-2$ cops.\vspace{0.1in}

\noindent\emph{Case II: $T$ contains at least two elements of $\mathcal{T}_{k-1,\ell}$ that share no vertices in $T$.}\vspace{0.1in}

Let $H$ and $H^\prime$ be minimal  vertex disjoint  subtrees of $T$ such that $c_\ell(H) =c_\ell(H^\prime) = k-1$  and each of $T \setminus H$ and $T \setminus H^\prime$ is connected. We choose $H$ and $H^\prime$ to be at maximum distance among all possible pairs of minimal vertex disjoint subtrees of $T$ that satisfy these properties.

Let $P = x_1, x_2, \dots, x_d$ be the path from $H$ to $H^\prime$ such that only $x_1$ and $x_d$ are contained in $H$ and $H^\prime$, respectively. (See Figure \ref{fig:spine}.)  We note that we need not have $x_1$ or $x_d$ be the vertex identified as $q$ in the construction of elements of $\mathcal{T}_{k-1,\ell}$.
Let $r$ be a centre vertex this path.
We consider $T$ to be rooted at $r$.
For $2 \leq i \leq d-1$, let $X_i$ be the subgraph containing $x_i$ induced by the deletion of $E(P)$ from $T$.
We  show  $c_\ell(T) = k-1$.

By the minimality of $H$, each of the subtrees of $H\setminus\{x_1\}$, say $H_1, H_2$, \dots $H_t$, require at most $k-2$ cops to be cleaned. As such,  $H$ can be cleaned using $k-1$ cops by leaving a single cop on $x_1$ and using the remaining $k-2$ cops to clean, in order, each $H_a$ $(1 \leq a \leq t)$. Similarly, if every vertex excluding those in $H^\prime$ are clean and all of the cops are at $x_d$, we can clean the vertices of $H^\prime$ so that no vertex of $T \setminus H^\prime$ becomes recontaminated to complete the cleaning of $T$ with $k-1$ cops.
As such we may assume $d \geq 3$.
To complete the proof it suffices to show that $X_i$ can be cleaned using at most $k-1$ cops so that $x_i$ is seen by a cop in at most every second time-step for $1 < i < d$.

By Lemma \ref{lem:white} each of $X_1, \dots X_t$ can be cleaned so that $x_i$ $(1 \leq i \leq t)$ is seen in every second time-step for $t < \min \{2\ell + 2, \lfloor\frac{d}{2}\rfloor \}$.

  If $\frac{d}{2} \leq 2\ell + 2$, then observe that for  $\frac{d}{2} \leq i \leq  2\ell + 2$  a subtree rooted at a $i$-descendent of $x_i$ cannot require $k-1$ cops to be cleaned, otherwise such a subtree, $T''$ would contain an element of $\mathcal{T}_{k-1,\ell}$ with $q = x_i$, $T_1 = H$, $T_2 = H^\prime$, $T_3 = T''$. Thus, by Lemma \ref{lem:white}, each $X_i$ can  be cleaned so that $x_i$  is seen in every second time-step for $i \leq \lfloor\frac{d}{2}\rfloor$.

  By symmetry, each of the trees $X_{\lfloor\frac{d}{2}\rfloor}, \dots X_{d-1}$ can be cleaned by applying Lemma \ref{lem:white}. This completes the proof. \end{proof}

%
%
%
%
%

\begin{figure}[htbp]		
	\[ \includegraphics[width=\textwidth]{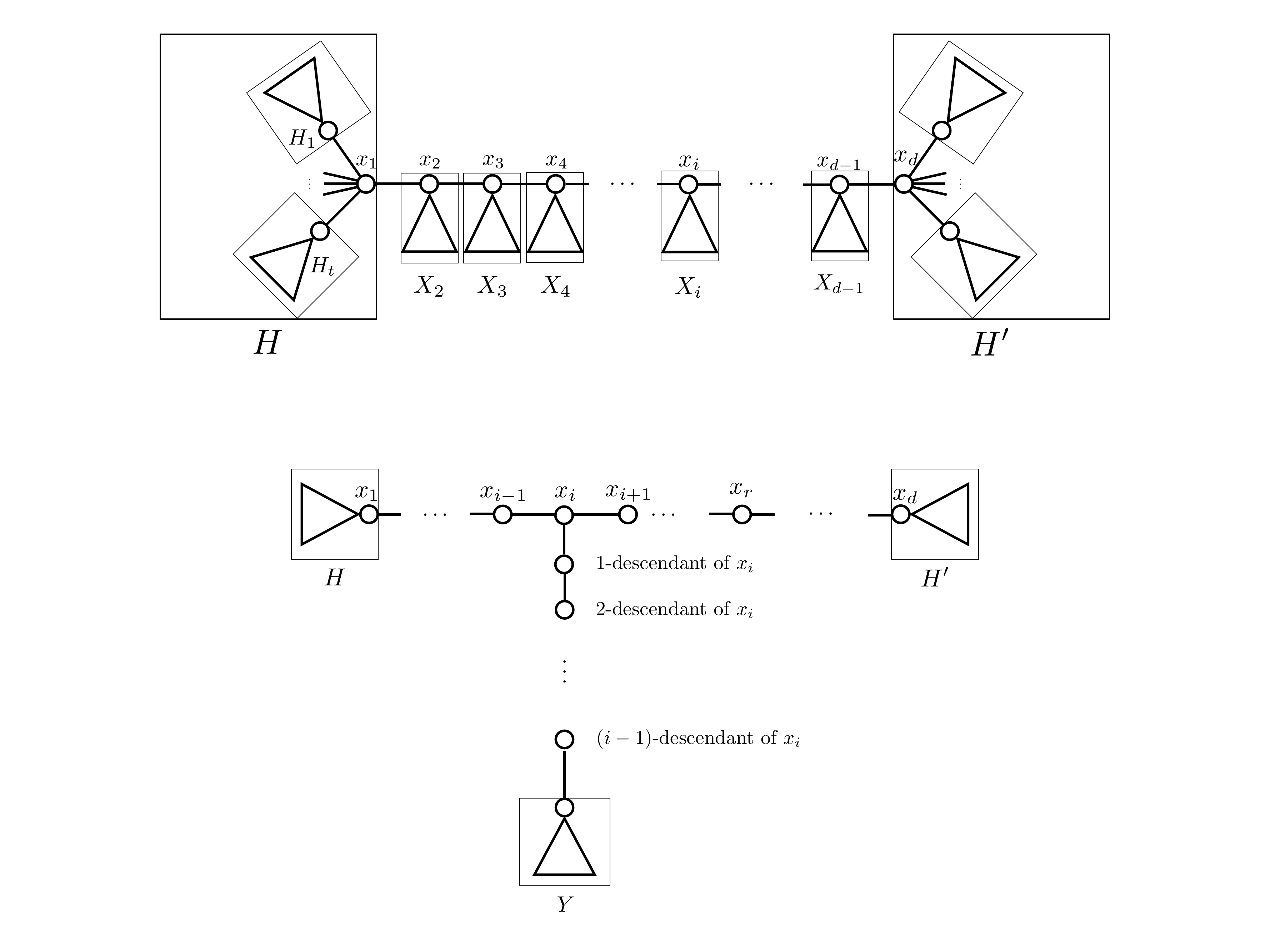} \]
	\caption{$T$ in Case II of  the proof of Theorem \ref{thm:treek}.}		
	\label{fig:spine}
	
\end{figure}

\begin{corollary}
	Given  a tree $T$,  $c_\ell(T) \leq \left \lceil \frac{h(T)}{2\ell + 2} \right \rceil$.
\end{corollary}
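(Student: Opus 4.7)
The plan is to deduce this corollary from Theorem \ref{thm:treek} together with a structural lower bound on the radius of any tree in the family $\mathcal{T}_{k,\ell}$, using the fact that a connected subtree of a tree is isometric.

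First I would establish by induction on $k$ that every $T^* \in \mathcal{T}_{k,\ell}$ satisfies $\min_{v \in V(T^*)} ec_{T^*}(v) \geq (k-1)(2\ell+2)$. The base case $k = 1$ (where $T^* = K_1$) is trivial. For the inductive step, suppose $T^* \in \mathcal{T}_{k,\ell}$ is formed from $T_1, T_2, T_3 \in \mathcal{T}_{k-1,\ell}$ by attaching internally disjoint paths of length $2\ell+2$ from the designated vertices $r_1, r_2, r_3$ to a common endpoint $q$. For any vertex $v \in V(T^*)$ I would pick an index $i$ such that the unique $v$-to-$T_i$ path in $T^*$ traverses the entire $q$-to-$r_i$ connecting path; such an $i$ always exists because the three branches are pairwise separated by $q$, so $v$ lies in (or on the path to) at most one of them. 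By the inductive hypothesis, $T_i$ contains a vertex $u$ with $d_{T_i}(r_i, u) \geq (k-2)(2\ell+2)$, whence $d_{T^*}(v, u) \geq (2\ell+2) + (k-2)(2\ell+2) = (k-1)(2\ell+2)$. Hence $ec_{T^*}(v) \geq (k-1)(2\ell+2)$ for every $v$, giving the claimed bound.

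Next, I would combine this with Theorem \ref{thm:treek}. Setting $k = c_\ell(T)$, the theorem guarantees that $T$ contains a subgraph $T^* \in \mathcal{T}_{k,\ell}$. As any connected subgraph of a tree is isometric, $diam(T) \geq diam(T^*)$, and the standard identity $rad = \lceil diam/2 \rceil$ for trees gives $h(T) \geq h(T^*) \geq (k-1)(2\ell+2)$. Dividing by $2\ell+2$ and taking ceilings yields $c_\ell(T) = k \leq \lceil h(T)/(2\ell+2) \rceil$, as desired.

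The main technical obstacle I expect is the induction step, specifically the casework on the possible locations of $v$ in $T^*$ (equal to $q$, lying on a connecting path, or lying inside some $T_j$) and verifying that in each case the chosen ``opposite'' branch $T_i$ satisfies the required distance lower bound via the inductive hypothesis. Once this is handled, the rest of the proof reduces to the isometric subgraph observation together with the elementary fact $rad = \lceil diam/2 \rceil$ for trees.
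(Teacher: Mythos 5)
Your overall route is the same as the paper's: lower-bound the height of every member of $\mathcal{T}_{k,\ell}$, then combine this with Theorem~\ref{thm:treek} and the fact that a subtree of a tree is isometric (so $h(T)\ge h(T^*)$). Your structural lemma is also correct, and in fact tight: the minimum eccentricity of any $T^*\in\mathcal{T}_{k,\ell}$ is at least $(k-1)(2\ell+2)$, with equality for the ``spider'' obtained by taking $T_1=T_2=T_3=K_1$ at each level. The gap is in your last sentence. From $h(T)\ge (k-1)(2\ell+2)$ you get $k\le \frac{h(T)}{2\ell+2}+1$, hence $k\le \left\lfloor \frac{h(T)}{2\ell+2}\right\rfloor+1$; this equals $\left\lceil \frac{h(T)}{2\ell+2}\right\rceil$ only when $2\ell+2$ does not divide $h(T)$. ``Dividing by $2\ell+2$ and taking ceilings'' therefore does not deliver the stated bound in the divisible case, and the discrepancy is not a defect of your argument: the spider with three legs of length $2\ell+2$ lies in $\mathcal{T}_{2,\ell}$, so $c_\ell=2$ by Lemma~\ref{lem:copnumberk}, while its height is exactly $2\ell+2$ and the corollary would predict $c_\ell\le 1$. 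So the statement as printed fails at multiples of $2\ell+2$.

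For comparison, the paper's one-line proof asserts that every tree in $\mathcal{T}_{k,\ell}$ has height at least $k(2\ell+2)$, which is off by one level (again the spider, with $k=2$ and height $2\ell+2$, is a counterexample); granting that claim, the stated ceiling bound would follow. Your version proves the correct height bound $(k-1)(2\ell+2)$ and hence the correct conclusion $c_\ell(T)\le \left\lfloor \frac{h(T)}{2\ell+2}\right\rfloor+1$; you should either state the corollary in that corrected form or note explicitly that your bound agrees with the printed one except when $2\ell+2$ divides $h(T)$.
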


\begin{proof}  Observe that every tree contained in $\mathcal{T}_{k,\ell}$ has height at least $k\cdot(2\ell + 2)$.
	Therefore a tree $T$ of height $h(T)$ contains no element of $\mathcal{T}_{\left \lceil \frac{h(T)}{2\ell + 2} \right \rceil +1,\ell}$.	
	By Theorem \ref{thm:treek} we have  $c_\ell(T) \leq \left \lceil \frac{h(T)}{2\ell + 2} \right \rceil$.
\end{proof}

\section{Open Problems} \label{sec:open}
In the original game of Cops and Robber, cops can see the location of the robber throughout the game.  In the other extreme, in zero-visibility Cops and Robber, cops cannot see the location of the robber unless a cop occupies the same vertex as the robber.  The more general $\ell$-visibility game covers the spectrum in between the original Cops and Robber and zero-visibility Cops and Robber and we have seen this problem to be distinct from the two extremes.  For example, Theorem~\ref{thm:treek} showed that for any choice of $\ell \ge 0$ and $k \ge 1$, there exists a tree such that $c(T) = 1$ and $c_\ell (T) = k$.  Naturally then, we ask for a graph $G$ with inequality amongst the cop numbers across the visibility spectrum.   We note that when $\ell$ is the radius of $G$, $c_\ell (G) \le c(G) +1$, since one cop sitting on a centre vertex can maintain sight of the robber while the $c(G)$ other cops capture the robber.  Therefore, increasing $\ell$ past the radius of $G$ will lead to $c_\ell (G)$ decreasing in value.  We therefore, pose the following problem regarding visibility across the spectrum.

\begin{problem} Provide a graph $G$ for which $c_0(G) > c_1(G) > \dots > c_{rad (G)}(G) > c(G)$. \end{problem}

In Section~\ref{sec:see}, we show that for $\ell \geq 1$, if an $\ell$-visibility cop sees the location of a robber on a chordal graph, then that cop can capture the robber.  Thus, although $c_\ell'(G)$ cops are needed to see the location of the robber; after this point, only one $\ell$-visibility cop is necessarily in order to eventually capture the robber.   Which cop-win graphs, beyond chordal graphs (including trees), have this property?

\begin{problem} Given a graph $G$, suppose that if an $\ell$-visibility cop, $C$, is within distance $\ell$ of the robber in some round $t$, then $C$ can then capture the robber in round $t'$ for some $t' >t$, where the play of the other cops in rounds $t+1$ through $t'$ is irrelevant.    Characterize such graphs.  \end{problem}

From Theorem \ref{thm:see}, we have established a relationship between $c_\ell (G) - c'_\ell (G)$ and $ c_\ell (G) - c(G)$, whenever $\ell \ge 2$: if $c_\ell (G) - c'_\ell (G) >0$, then $c_\ell (G) - c(G) \le 1$.  We have also established that when $G$ is chordal $c_\ell (G) - c'_\ell (G) =0$ for any $\ell \ge 0$.  However, in general, we do not currently have bounds on these differences.
\begin{problem}  For each $\ell \ge 1$, are the following differences bounded?
\begin{enumerate}
\item  $c_\ell (G) - c'_\ell (G)$
\item $c_\ell (G) - c(G)$
\end{enumerate}

\end{problem}

Because ``seeing implies capture" on chordal graphs, we are interested in the number of $\ell$-visibility cops required to see the location of the robber, $c_\ell'$.  However, $c(G)$ is not a lower bound for this parameter, as evidenced by Proposition~\ref{cor:complete}:  for example $c_2'(C_6) =1$ and yet $c(C_6) = c_2(C_6) = 2$.

\begin{problem} For $\ell \geq 1$, find a lower bound for $c_\ell'(G)$.  \end{problem}

\begin{problem} Suppose there is a cost associated with increasing the visibility of the cops. We then want to consider the ratio $\frac{c_{\ell}(G)}{c_{\ell+1}(G)}$.  What is the closure of $\frac{c_{\ell}(G)}{c_{\ell+1}(G)}$? \end{problem}


In~\cite{Complexity}, the authors provide a non-trivial family of graphs $\mathcal{G}_n$ and show that given $G \in \mathcal{G}_n$ and an integer $k > 0$, the problem of deciding whether $c_0(G) \leq k$ is NP-complete.  To do this, the authors relate $c_0(G)$ to the problem of computing pathwidth of $G$, which is known to be NP-complete \cite{pwcomplexity}.  By restriction this gives directly that given a graph $G$, and integers $\ell\geq 0$ and $k > 0$ that the problem of deciding if $c_\ell(G) \leq k$ is NP-complete. However, as pathwidth can be computed in polynomial time for several large classes of graphs, it gives hope that there are families of graphs for which the $\ell$-visibility cop number may be efficiently computed.

We conclude with a variant of Cops and Robbers.  In Section~\ref{intro}, we observe that once a $1$-visibility cop, $C$ has seen the robber (i.e. has occupied a vertex adjacent to the vertex occupied by the robber), $C$ can ``tail'' the robber, by following the path of vertices previously occupied by the robber.  Then $C$ can see the location of the robber at each subsequent round -- but of course this strategy of illuminating previous locations of the robber may not lead to capture.  It does however, lead us to a {\it time-delayed} variant of Cops and Robber.  During round $0$, cops choose a set of vertices to occupy and then the invisible robber chooses a vertex to occupy.  During each round $t > 0$, the cops move and then the robber moves and then the vertex occupied by the robber at the end of round $t-1$ is illuminated for the cops.  Thus, at the end of each round, the cops see the previous location of the robber (if the robber chooses to stay at his current vertex, the previous and current location of the robber may coincide).  This time-delay variant emulates the situation when an $\ell$-visibility cop occupies a vertex distance $\ell$ from the vertex occupied by the robber and follows the ``tailing'' strategy outlined above.  Denote by $c_t(G)$, the minimum number of cops required to capture the robber on a graph $G$.   Obviously $c(G) \leq c_t(G) \leq \gamma(G)$ for all graphs $G$, so the parameter is well-defined, and it is easy to find graphs for which strict inequality holds.

Throughout we have assumed that all graphs are finite, simple and contain a single robber. However, following work on Cops and Robber for infinite graphs \cite{infinite} it is possible to consider $\ell$-visibility Cops and Robber on infinite graphs. The result of Theorem \ref{thm:ellchord} does not extend to infinite chordal graphs with two-way infinite paths. In \cite{bonatoinf}  the authors show that the cop-number of the Rado graph is $\aleph_0$.   By \cite{erdos}, the same is true for almost all countably infinite graphs. However, there do exist infinite graphs for which the cop-number is finite. A first question in the study of $\ell$-visibility Cops and Robber on infinite graphs is to classify those infinite graphs $G$ and those integers $\ell$ such that $c_\ell(G) = \mathfrak{N}_0$, but $c_{\ell+1}(G)$ is finite.

\section{Acknowledgements}
N.E.~Clarke acknowledges research support from NSERC (2015-06258). D.~Cox acknowledges research support from NSERC and Mount Saint Vincent University. C.~Duffy acknowledges research support from AARMS. D.~Dyer acknowledges research support from NSERC. M.E.~Messinger acknowledges research support from NSERC (grant application 356119-2011) and Mount Allison University.


\end{document}